\newcommand{\editcomment}[2]
          {\ifthenelse{\boolean{showeditcomments}}
            {\marginpar{\footnotesize{\textcolor{red}{[\textbf{#1:} \emph{#2}]}}}}
            {}
          }
\newtheorem{thm}{Theorem}
\newtheorem{thmstar}[thm]{Theorem*}
\newtheorem{lemma}[thm]{Lemma}
\newtheorem{prop}[thm]{Proposition}
\newtheorem{propstar}[thm]{Proposition*}
\newtheorem{cor}[thm]{Corollary}
\newtheorem{defn}[thm]{Definition}
\newtheorem{ex}[thm]{Example}
\newcommand{\pp}{\mathbb{P}}
\newcommand{\bfv}{\mathbf{v}}
\newcommand{\ind}{\mbox{$\perp \kern-5.5pt \perp$}}
\newcommand\Span{\operatorname{Span}}
\begin{document}

\title{Identifiability of 2-tree mixtures for group-based models}

\author{Elizabeth~S.~Allman, Sonja~Petrovi\'c, John~A.~Rhodes, and~Seth Sullivant
\thanks{ E.S.~Allman  and J.A.~Rhodes are with the Department
of Mathematics and Statistics, University of Alaska Fairbanks, Fairbanks, AK, 99775-6660.
E-mail: e.allman@alaska.edu, j.rhodes@alaska.edu}
\thanks{S.~Petrovi\'c is with the Department of Mathematics, Statistics, and Computer Science,
University of Illinois at Chicago, Chicago, IL, 60607-7045.
E-mail: petrovic@math.uic.edu}
\thanks{S.~Sullivant is with the Department of Mathematics,
North Carolina State University, Raleigh, NC, 27695.
E-mail: smsulli2@ncsu.edu}
\thanks{The authors thank the Statistical and Applied Mathematical Sciences Institute, for funding and hospitality, as this research was conducted under its Program on Algebraic Methods in Systems Biology and Statistics. We also thank the National Science Foundation for support through grant DMS-0714830 for ESA and JAR, and grant DMS-0840795 for SS. Finally, Erich Kaltofen of NCSU generously shared computer resources provided under NSF grant DMS-0532140.}
}

\maketitle

\begin{abstract}
Phylogenetic data arising on two possibly different tree topologies might be mixed through several 
biological mechanisms, including incomplete lineage sorting or horizontal gene transfer in the case 
of different topologies, or simply different substitution processes on characters in the case of the 
same topology. Recent work on a 2-state symmetric model of character change showed that for 4 taxa such a mixture 
model has non-identifiable parameters, and thus it is theoretically impossible to determine the two tree 
topologies from any amount of data under such circumstances. Here the question of identifiability is 
investigated for 2-tree mixtures of the 4-state group-based models, which are more relevant to DNA 
sequence data.  Using algebraic techniques, we show that the tree parameters are identifiable for the 
JC and K2P models.  We also prove that generic substitution parameters for the JC mixture models 
are identifiable, and for the K2P and K3P models obtain generic identifiability results for mixtures on 
the same tree.  This indicates that the full phylogenetic signal remains in such mixtures, and that the 
2-state symmetric result  is thus a misleading guide to the behavior of other models. 
\end{abstract}





\section{Introduction}
\label{sec:intro}

A basic question concerning any statistical model is whether a
probability distribution arising from the model uniquely determines
the parameters that produced it.  If so, the parameters are said to be
\emph{identifiable}. Indeed, parameter identifiability is necessary
for the consistency of inference.

In phylogenetics, it is especially important that the tree parameter
of a model be identifiable, so that evolutionary histories can be
consistently inferred. For basic models of character evolution along a
tree, in which all sites behave independently and identically,
identifiability of both the tree and continuous parameters is
long-established.  However, as phylogenetic models grow in complexity,
it becomes increasingly difficult to analyze the models thoroughly
enough to be certain this property is retained. Indeed, mixture models
of all sorts present difficulties, though positive results have been
obtained for models with a small number of classes evolving on the
same tree \cite{Allman2006}, and those with scaled
$\Gamma$-distributed rates \cite{AllmanAneRhodes07}. However, even for
the GTR+$\Gamma$+I model, which is currently the most commonly used in
DNA data analysis, it is yet to be proved that trees are identifiable.

Several recent works, including \cite{MosVig}, \cite{StefVig2007},
\cite{Matsen2007}, and \cite{Matsen2008}, considered 2-class mixture
models in which the two classes evolve along possibly different
topological trees. Such models could describe instances of horizontal
transfer of genetic material between taxa, or incomplete lineage
sorting in sequences composed of several concatenated genes.  In
particular, Matsen and Steel \cite{Matsen2007} showed that under the
binary symmetric model of Cavender-Farris-Neyman, a 2-class mixture on
a single 4-taxon tree can exactly `mimic' a single class model on a
different tree. Because of the small size of the state space in this
model, its group-based structure, and the small size of the tree,
explicit calculations were possible to fully analyze this
situation. However, one should be cautious about extrapolating from
this result to a pessimistic view about identifiability of similar
phylogenetic mixtures. The mixture of \cite{Matsen2007} is an
11-parameter model producing a probability distribution in a
7-dimensional space, so it is certainly overparameterized. While this
dimension count does not guarantee non-identifiability of the tree, it
does explain why it might likely arise.

By either passing to models with larger state spaces, such as 4-state
models appropriate to DNA, or by considering trees relating more taxa,
the joint distribution of states at the leaves of the tree will be
embedded in a larger dimensional space. Thus we might hope to avoid
overparameterization issues through either of these modifications. As
the analysis of real biological data typically involves both of these
changes, these are the types of mixture models it is most desirable to
understand.

\medskip

Here we consider 2-class mixtures analogous to those in the works
above, but for larger trees and/or state spaces.  

We continue to work
with group-based models, 
focusing primarily on those for DNA, 
so that we retain the powerful tool of the Fourier/Hadamard coordinate
transformation. 

We also make use of computational algebra software to
perform calculations well beyond what could be done `by hand.' Our
results on identifiability are generally quite positive, and although
these group-based models are still special cases, we believe they
provide a better guide to the behavior of more realistic models than
those of \cite{Matsen2007}.

\medskip

This paper is organized as follows. In Section \ref{sec:identProblem}
we introduce 2-tree mixture models and the identifiability problem in
the algebraic setting.  Background on group-based models is covered in
Section \ref{sec:groupBasedModels}, from basic definitions through
their presentation in terms of Fourier coordinates.

Section \ref{sec:id-trees} deals with identifiability of the tree
parameters for Jukes-Cantor and Kimura 2-parameter mixture models on
two trees.  The main result, that tree parameters in such mixtures on
at least 4 taxa are generically identifiable, is Theorem
\ref{thm:maintwotree} and its corollary.  Even with generic tree
identifiability proved for 2-tree mixtures, a natural question is
whether a single-class (unmixed) model can be distinguished from a
2-tree mixture. (This is not answered by the previous result, since
while a single-class model is a special case of a 2-class model, it is
non-generic.) We investigate this problem in Section
\ref{sec:mix-and-unmix}.

Finally, in Section \ref{sec:id-params}, we turn to identifiability of
the continuous parameters of these models, assuming the tree
parameters are known. One feature of a part of our analysis is the use
of computational algebra software to obtain some results \emph{with
  very high probability}. Although technically these remain
conjectures, lacking rigorous proof, the conclusions we draw from such
calculations are highly reliable for theoretical reasons.  While using
calculations this way is familiar to applied algebraic geometers, this
approach may be new to others, so we begin the section by explaining
the reasoning informally. With this qualification, we establish the generic
identifiability of continuous parameters
for the JC model when either $n\ge5$, or $n=4$ and the trees are
distinct.  In the case of identical trees with $n \ge 5$ taxa, we
give a fully rigorous argument for the three group-based models: JC,
K2P, and K3P. An interesting non-identifiable case arises from the
Jukes-Cantor mixtures on two identical 4-taxon trees.

Command files and instructions for verifying all our computations
using the software Singular \cite{GPS09} can be found at the
supplementary materials website for this paper \cite{TTwebsite}.  We
include both computations supporting our arguments, and those
producing our examples.

\medskip

We would, of course, prefer to push the work here beyond the
group-based models, to include those more routinely used in current
data analysis. It is possible, after all, that the group-based models
are special enough that identifiability results for them do not carry
over to more elaborate models. However, our current computational and
theoretical tools are not sufficient for us to address questions for
more general models.


\section{Preliminaries}
\label{sec:identProblem}

Consider a phylogenetic model of $k$-state character change on
$n$-taxon trees (e.g., for $k=4$, the Jukes-Cantor model). We assume the taxa labelling the leaves are identified with $[n]=\{1,2,,\dots,n\}$.
Then for
each leaf-labelled tree $T$, there is a \emph{parameterization map} $\psi_T$ giving
the joint distribution of states at the leaves of the tree $T$ as
functions of continuous parameters. With $S_T$ denoting the continuous
parameter space on $T$, which we assume is some full-dimensional
subset of $\mathbb R^m$,
$$\psi_T:S_T \to \Delta^{k^n-1},$$ where $\Delta^{k^n-1}\subset[0,1]^{k^n}$
is the probability simplex comprised of non-negative real vectors
summing to 1.

Given such a model, the associated 2-tree mixture model has the
following parameterization maps: For every pair of $n$-taxon trees
$T_1$ and $T_2$ on the same taxa,
let $S_{T_1,T_2}=S_{T_1}\times S_{T_2}\times [0,1] $ and $$\psi_{T_1,T_2}:S_{T_1,T_2} \to
\Delta^{k^n-1},$$ be defined by
$$\psi_{T_1,T_2}(s_1,s_2,\pi)=\pi \, \psi_{T_1}(s_1)+ (1-\pi)\psi_{T_2}(s_2).$$
Here $\pi$ is the \emph{mixing parameter}, giving the proportion of i.i.d.~sites
that evolve along tree $T_1$.

We will only consider \emph{algebraic models}, for which the maps
$\psi_{T}$, and hence $\psi_{T_1,T_2}$, are defined by polynomial
formulas. This is a small restriction, as many models (\emph{e.g.},
standard continuous-time models) which are not polynomial can be
embedded in ones that are (\emph{e.g.}, the general Markov
model). Algebraic models can be studied from the perspective of algebraic geometry \cite{Drton2008}, after extending $\psi_T$ and $\psi_{T_1,T_2}$ to complex
polynomial maps, with images in $\mathbb C^{k^n}$. We refer to $S_T$ and $S_{T_1,T_2}$ as  \emph{stochastic parameter spaces}, to distinguish them from the \emph{complex parameter spaces} of these extensions.

We denote by $V_{T}$
the algebraic variety which is the Zariski closure of the image of $\psi_T$ in
the complex projective space $\mathbb P^{k^n-1}$. (See \cite{CLOS,Harris} for background in algebraic geometry.) Then the closure of the
image of $\psi_{T_1,T_2}$ is a variety called
the \emph{join} of $V_{T_1}$ and $V_{T_2}$, denoted by $$V_{T_1} *V_{T_2}.$$ 
The join can be described
geometrically as the smallest variety containing all lines
intersecting both $V_{T_1}$ and $V_{T_2}$. In the case $T_1=T_2$ the
join is called the \emph{secant} variety of $V_{T_1}$.

We use $\mathcal M_T$, and $\mathcal M_{T_1}*\mathcal M_{T_2}$, to
denote the image of the parameterization maps when applied only to the
stochastic parameter spaces $S_T$ and $S_{T_1,T_2}$. Thus these denote
the sets of all probability distributions arising from the
parameterized models, and $$\mathcal M_{T}\subsetneq V_{T},\ \
\mathcal M_{T_1}*\mathcal M_{T_2}\subsetneq V_{T_2}*V_{T_2}.$$ While
$\mathcal M_{T}$ and $\mathcal M_{T_1}*\mathcal M_{T_2}$ are of course
the objects of primary interest to phylogenetic applications, the larger complex varieties $V_T$
and $V_{T_1}*V_{T_2}$ are more amenable to algebraic study.

Another parameterization of a dense subset of $V_{T_1}*V_{T_2}$, which we will also use, is 
$$\phi_{T_1,T_2}:V_{T_1}\times V_{T_2}\times \mathbb P^1 \dashrightarrow V_{T_1}*V_{T_2},$$
which when restricted to an affine subset simply maps points on the two varieties to their convex sum using
the third coordinate as a weight.
(The dashed arrows indicates the map is only defined on a dense subset of the stated domain.)
If $\pi\in \mathbb C\subset\mathbb P^1$, then
\begin{equation}\psi_{T_1,T_2}(s_1,s_2,\pi)=\phi_{T_1,T_2}(\psi_{T_1}(s_1),\psi_{T_2}(s_2),\pi).\label{eq:compparam}
\end{equation}

\smallskip

Associated to any algebraic variety $V$ is the ideal $\mathcal{I} = \mathcal{I}(V)$ of polynomials that defines it; namely, a polynomial $f \in \mathcal{I}$ if, and only if, for any point $\bfv \in V$, $f(\bfv) = 0$.  For a variety associated to a phylogenetic model, such polynomials give constraints that entries of a distribution of states at the leaves of a tree must satisfy if it arises from the given model. First introduced in phylogenetics by Cavender and Felsenstein \cite{CF87} and Lake \cite{Lake87},  these polynomials are known as \emph{phylogenetic invariants}, and have been studied extensively in many papers, including \cite{Evans1993, SzEdStPe93, SzStEr93, Steel1995, Hendy1996,  Sturmfels2005, ARgm, ARmb}.

\medskip

For algebraic models, it is convenient to slightly weaken the notion of identifiability to \emph{generic identifiability}. The word `generic' is used to mean `except on a proper algebraic subvariety' of the parameter space.  Although it is sometimes possible to be explicit about this subvariety, we usually are not, since the key point in interpretation is that the subvariety is a closed set of Lebesgue measure 0 inside the larger set.  Thus regardless of the precise subvariety involved, `randomly' chosen points are generic with probability 1.
 
An additional issue for identifiability of 2-tree mixtures is class swapping: Interchanging the trees, along with their parameters, while replacing the mixing parameter $\pi$ by $1-\pi$, has no effect on the resulting distribution. Thus, a useful notion of identifiability must allow for this.

\begin{defn}
  The tree parameters of the 2-tree mixture model are 
  \emph{generically identifiable} if, for any binary trees  $T_1,T_2$ on the same set of taxa, and generic
  choices of $s_1, s_2, \pi$,
$$\psi_{T_1,T_2}(s_1,s_2,\pi) = \psi_{T'_1,T'_2}(s_1',s_2',\pi')$$ implies $\{T_1,T_2\}=\{T_1',T_2'\}.$
\end{defn}

\begin{defn}
  The continuous parameters of a $2$-tree mixture model on $T_1$ and $T_2$ are \emph{generically identifiable} if for generic choices of
  $s_1,s_2, \pi$, $$\psi_{T_1,T_2}(s_1,s_2,\pi) =  \psi_{T_1,T_2}(s_1',s_2',\pi')$$ implies
  $(s_1,s_2,\pi)=(s_1',s_2',\pi')$, or, in the case where $T_1 = T_2$,
  $(s_1,s_2,\pi)=(s_2',s_1',1-\pi)$.
\end{defn}

Let $K \subset [n]$ be a subset of the leaf set.  For any tree $T$ on
$n$ leaves, $T|_K$ will denote the induced subtree of $T$ with leaf set $K$.  Since
marginalization onto leaf subsets is a linear map that preserves the
mixture structure of a phylogenetic model, we obtain the following useful fact.

\begin{lemma}\label{lem:subtree}
  Let $T_1, T_2, T_3, T_4$ be $n$-taxon trees, not necessarily
  distinct, and let $K \subseteq [n]$.  If $V_{T_1|_K} \ast V_{T_2|_K}
  \not\subseteq V_{T_3|_K} \ast V_{T_4|_K}$, then $V_{T_1} \ast
  V_{T_2} \not\subseteq V_{T_3} \ast V_{T_4}$.
\end{lemma}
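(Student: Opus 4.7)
My plan is to argue by contrapositive: assume $V_{T_1} \ast V_{T_2} \subseteq V_{T_3} \ast V_{T_4}$ and then deduce the analogous containment after restricting to the leaf subset $K$, which is the contrapositive of the desired implication. The central tool, suggested by the remark preceding the statement, is the marginalization map $\mu_K : \mathbb{C}^{k^n} \to \mathbb{C}^{k^{|K|}}$ that sums out the coordinates indexed by leaves outside $K$.

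I would begin by recording two facts about $\mu_K$: it is linear, and for each $n$-taxon tree $T$ it sends $V_T$ into $V_{T|_K}$ with Zariski-dense image. The density comes from the standard observation that marginalizing a distribution generated by $\psi_T$ yields a distribution on $T|_K$ with appropriately re-combined edge parameters, so $\mu_K \circ \psi_T$ factors through $\psi_{T|_K}$; since $\psi_{T|_K}$ is dominant onto $V_{T|_K}$ by construction of the latter as a Zariski closure, density follows. Next I would propagate these two facts through the join: because $\mu_K$ is linear, it carries any element of the form $\alpha p + \beta q$ with $p \in V_{T_1}$ and $q \in V_{T_2}$ into $V_{T_1|_K} \ast V_{T_2|_K}$, and since $V_{T_1} \ast V_{T_2}$ is by definition the Zariski closure of the set of such expressions, continuity of $\mu_K$ in the Zariski topology gives $\mu_K(V_{T_1} \ast V_{T_2}) \subseteq V_{T_1|_K} \ast V_{T_2|_K}$; combined with the density statement, the Zariski closure of this image is exactly $V_{T_1|_K} \ast V_{T_2|_K}$, and analogously for the pair $T_3, T_4$.

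Applying $\mu_K$ to both sides of the assumed containment and passing to closures then yields $V_{T_1|_K} \ast V_{T_2|_K} \subseteq V_{T_3|_K} \ast V_{T_4|_K}$, the desired contradiction. The step I would be most careful about is the density claim, since a subtree reparameterization under marginalization requires some bookkeeping with contracted or combined edge parameters; but for the algebraic group-based models at hand this compatibility is well known and poses no real obstacle.
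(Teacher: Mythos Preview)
Your argument is correct and follows essentially the same route as the paper: both use that marginalization is a linear map carrying $V_T$ onto (a dense subset of) $V_{T|_K}$, and that linearity makes this compatible with the join construction, so the contrapositive containment follows. You are somewhat more explicit than the paper about the closure and density bookkeeping, whereas the paper simply asserts $A(V\ast W)=AV\ast AW$ for linear $A$ and that marginalization ``sends $V_T$ to $V_{T|_K}$''; but the underlying idea is identical.
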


\begin{proof}
  Marginalization to a fixed set $K$ gives a linear map from $\mathbb{C}^{k^n}$ to
  $\mathbb{C}^{k^{|K|}}$, which sends $V_{T}$ to $V_{T|_K}$
  for any $T$.  For any linear transformation $A$ and any varieties,
  $V, W$, we have $A( V \ast W) = AV \ast AW$, because the mixture
  construction is a linear operation.  Since for any sets $S_1, S_2$, and any
  map $f$, $f(S_1) \not\subseteq f(S_2)$ implies that $S_1 \not\subseteq S_2$, the lemma follows.
\end{proof}


\section{Group-based Phylogenetic Models}
\label{sec:groupBasedModels}

Group-based phylogenetic models will be the main subject of study in this paper, so we collect known
results about these models, including their natural representation in Fourier coordinates.  

\smallskip

Throughout we assume that all trees $T$ are binary.  We root $T$ by picking an arbitrary  edge, and introducing a root $\rho$ as a distinguished node of degree $2$ along it.  Thus, every edge of $T$ may be considered directed away from the root.  To each node $v$ in the tree, we associate a discrete random variable $X_v$ with $k$ states, and to each directed edge $e$ in the tree, we associate a Markov transition matrix $M_e$, describing the conditional probabilities of various state changes.  We assume the reader is familiar with the usual assumptions of Markov processes on trees \cite{SemSt}.  The \emph{joint  distribution of states at the leaves of $T$} may be computed (the image of $\psi_T$), once the root distribution and the collection of $\{M_e\}$ are specified.  We refer to the entries of the root distribution and the Markov matrices as the \emph{continuous  parameters} of the model.

\begin{defn} 
Let  $T$ be a binary tree rooted at $\rho$. Let $G$ be a finite abelian group of order $k$, and identify its elements with the state space of the random variables $X_v$.  Then a \emph{group-based model} on $T$ for the group $G$ is a phylogenetic model with a uniform root distribution,  and transition probabilities on each edge $e$ satisfying $$M_e(g,h)  = f_e(g - h)$$ for some functions $f_e : G \to \mathbb{R}$.
\end{defn}

Some standard examples of group-based models are the binary symmetric model, a.k.a.~the Cavender-Farris-Neyman (CFN) model, which is associated to the group $\mathbb Z_2$; and the Jukes-Cantor (JC) model, the Kimura 2-parameter (K2P) model, and the Kimura 3-parameter (K3P) model, which are associated to the group $\mathbb Z_2\times \mathbb Z_2$. With appropriate ordering of the state spaces, the transition matrices for these models have the following forms, respectively:
$$
\begin{pmatrix}
\alpha & \beta \\
\beta & \alpha
\end{pmatrix}, \quad 
\begin{pmatrix}
\alpha & \beta & \beta & \beta \\
\beta & \alpha & \beta & \beta \\
\beta & \beta & \alpha & \beta \\
\beta & \beta & \beta & \alpha
\end{pmatrix} ,
$$
$$
\begin{pmatrix}
\alpha & \beta & \gamma & \gamma \\
\beta & \alpha & \gamma & \gamma \\
\gamma & \gamma & \alpha & \beta \\
\gamma & \gamma & \beta & \alpha
\end{pmatrix}, 
\quad
\begin{pmatrix}
\alpha & \beta & \gamma & \delta \\
\beta & \alpha & \delta & \gamma \\
\gamma & \delta & \alpha & \beta \\
\delta & \gamma & \beta & \alpha
\end{pmatrix}.
$$

Group-based models are subject to a remarkable linear change of coordinates, called the discrete Fourier, or Hadamard, transform \cite{Hendy89, Hendy1993, Evans1993, SzEdStPe93, SzStEr93}.  After applying the Fourier transform the models are seen to be toric varieties \cite{Sturmfels2005}.  In particular, the transformed image coordinates are given in terms of transformed domain coordinates by a monomial parameterization.  

To make this parameterization explicit, henceforth let $G$ be $\mathbb Z_2$ or $\mathbb Z_2 \times \mathbb Z_2$, and $T$ an $n$-taxon tree.  The Fourier coordinates for a group-based model are denoted $q_{g_1,  \ldots, g_n},$ where $g_i \in G$ for all $i$.  Let $\Sigma (T)$ be the set of splits induced by the edges of $T$.  To each split $A|B\in \Sigma(T)$, we associate a set of parameters: $a^{A|B}_g$ where $g \in G$. The toric parameterization for the model is then given by:

\begin{equation}\label{eq:fourierparam}
  q_{g_1, \ldots, g_n} = \left\{ \begin{array}{cc}
      \prod_{A|B \in \Sigma(T)}  a^{A|B}_{\sum_{i\in A} g_i}, &  \mbox{ if } \sum_{i =1}^n g_i  = 0\\
      0, &  \mbox{ otherwise}.
\end{array} \right.
\end{equation}

Note that by our choices of $G$, when $\sum_{i =1}^n g_i = 0$ we will have $\sum_{i \in A} g_i = \sum_{i \in B} g_i$ for
any split $A|B$. Thus the formula above does not depend on the ordering of the sets in the splits.

\smallskip

To ease notation, we describe trees by omitting trivial splits associated to leaf edges (\emph{i.e.}, those of the form $\{i\}\,|\, [n] \smallsetminus \{i\}$).  When describing the toric parametrization of a group-based model, we abbreviate the parameters associated with the edge leading to leaf $i$ by $a^{i}_{g}$.

For elements in the group $G = \mathbb{Z}_2 \times \mathbb{Z}_2$ associated to the Jukes-Cantor and Kimura models, we (arbitrarily) identify nucleotides with the group elements in the following way: $A = (0,0)$, $C = (0,1)$, $G = (1,0),$ and $T = (1,1)$.  We illustrate these notions with an example on a $5$-taxon tree.

\begin{ex}\label{ex:5leaffourier}
  Let $T = \{12|345, 123|45\}$.  The toric parameterization for the
  K3P model is given by formulas of the form:
$$
q_{g_1g_2g_3g_4g_5}  =  a^1_{g_1} a^2_{g_2} a^3_{g_3} a^4_{g_4} a^5_{g_5} a^{12|345}_{g_1 + g_2} a^{123|45}_{g_1 + g_2 + g_3}
$$
where $\sum_{i = 1}^5 g_i = 0$, and  $q_{g_1g_2g_3g_4g_5} = 0$ otherwise.  For example,
$$
q_{CCCTG} = a^1_{C} a^2_{C} a^3_{C} a^4_{T} a^5_{G} a^{12|345}_{A} a^{123|45}_{C}.
$$
\end{ex}

For the JC and K2P models, the Fourier coordinates are described by simply imposing
additional relationships on the continuous parameters.

\begin{prop}\cite{Evans1993,Hendy1993}
  In Fourier coordinates, the K3P model on a tree $T$ consists of
  all the Fourier vectors arising from representation
  (\ref{eq:fourierparam}) so that, for each split $e$, $a^e_A =1$,
  and 
  $a^e_C$, $a^e_G$, $a^e_T \in (0,1]$.  The K2P model is
  the submodel of the K3P model satisfying additionally, that for all splits
  $e$, $a^e_G = a^e_T$.  The JC model is the submodel of the
  K2P model satisfying additionally, that for all splits $e$, $a^e_C = a^e_G =
  a^e_T$.
\end{prop}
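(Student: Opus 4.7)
The plan is to apply the Fourier/Hadamard transform directly: each transition matrix $M_e$ of a group-based model is diagonalized by the character table of $G = \mathbb Z_2\times\mathbb Z_2$, with eigenvalues equal to the character sums $a^e_g = \sum_{h \in G} f_e(h)\,\chi_g(h)$. Once this is set up, the three constraints in the proposition become short, explicit verifications of equalities among specific Fourier coefficients, and the global parameterization (\ref{eq:fourierparam}) on the whole tree reduces to the single-edge statement.

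First I would recall (\ref{eq:fourierparam}) as the standard Hendy--Evans--Sz\'ekely factorization of the joint distribution of a group-based model on a tree, citing \cite{Hendy89, Hendy1993, Evans1993, SzEdStPe93}; this reduces the problem to a statement about the per-edge parameters $a^e_g$ and their relation to a single Markov matrix $M_e$. Fix the ordering of states and identification of $\{A,G,C,T\}$ with $\mathbb Z_2\times\mathbb Z_2$ under which the K3P, K2P, and JC matrices in the preceding display are written, so that from the first row of each matrix one reads off $f_e$ on each group element. The trivial character $\chi_A \equiv 1$ gives $a^e_A = \sum_h f_e(h) = 1$ from stochasticity of each row of $M_e$. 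The range $a^e_g \in (0,1]$ for $g\neq A$ encodes the standard assumption that each $M_e$ has strictly positive eigenvalues, which is the usual diagonal-dominance condition (equivalently, that $M_e$ arises from a continuous-time Markov process with non-negative rates).

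Next I would carry out the two specializations. For K2P, the matrix structure forces $f_e$ to take a common value on the two transversion-type group elements; plugging this into the character-sum formulas collapses two of the three nontrivial Fourier coefficients to the common value $\alpha - \beta$, and checking which two survive under the ordering that makes the displayed matrix ``purines first, pyrimidines second'' yields precisely $a^e_G = a^e_T$. For JC, imposing the further off-diagonal equality $\beta=\gamma$ reduces all three nontrivial character sums to $\alpha-\beta$, giving $a^e_C = a^e_G = a^e_T$. Since the Fourier transform is invertible, each of these linear conditions on the $a^e_g$ pulls back to exactly the stated equalities among entries of $M_e$, so the submodel inclusions are actually equalities.

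The only real obstacle is notational bookkeeping: one must fix the identification of $\{A,G,C,T\}$ with elements of $\mathbb Z_2\times\mathbb Z_2$ consistently with the ordering used in the displayed matrices, otherwise the surviving linear equalities in Fourier coordinates come out as a permutation of the claimed ones. With that alignment done once, the proof is a half-page of explicit character-sum computations together with citation of the standard group-based parameterization result.
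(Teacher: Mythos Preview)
The paper does not supply its own proof of this proposition: it is stated with attribution to \cite{Evans1993,Hendy1993} and treated as background, with no proof environment following it. So there is no in-paper argument to compare against.

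Your sketch is the standard derivation underlying those references, and the plan is sound: diagonalize each $M_e$ by the character table of $\mathbb Z_2\times\mathbb Z_2$, read off $a^e_g=\sum_h f_e(h)\chi_g(h)$, get $a^e_A=1$ from stochasticity, and then specialize. The one place to be careful is exactly the bookkeeping issue you flag. With the paper's identification $A=(0,0)$, $C=(0,1)$, $G=(1,0)$, $T=(1,1)$, the displayed K2P matrix is \emph{not} in the order $A,C,G,T$ (biologically $\beta$ is the transition $A\leftrightarrow G$, $C\leftrightarrow T$), so if you naively read $f_e$ off the first row in that order you will obtain $a^e_C=a^e_T$ rather than the stated $a^e_G=a^e_T$. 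You already anticipated that a permutation can appear; just be sure when you write it up that you either reorder the displayed matrix to match the group identification, or state explicitly which ordering you are using so that the equality lands on $G,T$ as claimed.
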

 
Significantly for the work later in this article, the Fourier transform is a \emph{linear} change of coordinates. Thus the
operation of taking tree mixtures commutes with the Fourier transform, which allows us to naturally represent the mixture models we consider in Fourier coordinates. Though these mixture models are not toric, we still gain insight from this viewpoint.

\smallskip

To close this section, we make several comments regarding some combinatorial aspects of Fourier coordinates for group-based models. As linear invariants are crucial to the arguments below, we first discuss enumeration of distinct Fourier coordinates, and computations of the dimension of the space of linear invariants for a model.  In closing, we illustrate some useful combinatorial mnemonics for working with Fourier coordinates and identifying invariants. Although these devices are likely familiar to experts in group-based models, we hope our exposition will be useful to those less familiar with these models.

The zero set of the linear invariants for any variety $V\subset \mathbb P^n$ is the smallest linear
subspace of $\pp^n$ containing $V$. This set is called
 the \emph{span of $V$}, $\Span(V)$.  The span of a finite collection of varieties is
defined similarly, as the span of their union. Note that, by the join construction, it is
immediate that $\Span(V \ast W) = \Span(V, W)$.

\smallskip

For group-based models on an $n$-taxon tree (that is, an \emph{unmixed} model), the number of distinct Fourier coordinates is precisely the
dimension of the span of the model, as there are no linear relations between distinct monomials. This establishes:
\begin{prop}
For the CFN and K3P models, there are no non-trivial linear
invariants.  The number of distinct Fourier coordinates is $2^{n-1}$
for the CFN model and $4^{n-1}$ for the K3P model.
\end{prop}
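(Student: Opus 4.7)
The plan is to verify both claims by a direct count together with a separation argument. Since the parametrization (\ref{eq:fourierparam}) forces $q_{g_1,\ldots,g_n} = 0$ whenever $\sum_i g_i \ne 0$, the number of potentially non-vanishing Fourier coordinates equals the number of tuples in $G^n$ with $\sum_i g_i = 0$. That number is exactly $|G|^{n-1}$, since $g_1,\ldots,g_{n-1}$ may be chosen freely in $G$ and $g_n = -\sum_{i<n} g_i$ is then determined. Taking $|G| = 2$ for the CFN model and $|G| = 4$ for the K3P model yields the claimed counts $2^{n-1}$ and $4^{n-1}$.

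To rule out non-trivial linear invariants, it suffices to show that the non-vanishing Fourier coordinates, viewed as polynomials in the continuous parameters $\{a^e_g\}$, are linearly independent, since the only remaining linear relations would be the trivial ones $q_{g_1,\ldots,g_n}=0$ for $\sum_i g_i \ne 0$. Because distinct monomials are linearly independent over $\mathbb{C}$, it is enough to prove that distinct tuples $(g_1,\ldots,g_n)$ satisfying $\sum_i g_i = 0$ yield distinct monomials on the right-hand side of (\ref{eq:fourierparam}). The key observation is that the leaf edges act as a fingerprint: for each leaf $i$, the monomial assigned to a tuple $(g_1,\ldots,g_n)$ contains the factor $a^i_{g_i}$ with exponent $1$ and contains no other $a^i_g$ with $g \ne g_i$. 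Reading off which leaf-edge parameter appears at position $i$ therefore recovers $g_i$, and so distinct tuples necessarily give distinct monomials.

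Combining the two steps, the $|G|^{n-1}$ non-vanishing Fourier coordinates are linearly independent as polynomial functions of the parameters, so the image of the parametrization spans the full coordinate subspace cut out by $q_{g_1,\ldots,g_n}=0$ for $\sum_i g_i \ne 0$, and no further linear invariant can hold on this image. The argument is short and I do not anticipate a substantive obstacle; the only care needed is in confirming that the leaf-edge parameters truly separate all tuples, which works because each leaf edge contributes precisely one factor whose subscript is the value $g_i$, allowing the tuple to be reconstructed from the monomial.
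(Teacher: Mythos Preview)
Your proof is correct and follows essentially the same reasoning as the paper, which is very terse here: the paper simply observes that ``there are no linear relations between distinct monomials'' and states the proposition without further argument. You have supplied the two details the paper leaves implicit---the count $|G|^{n-1}$ of tuples with $\sum_i g_i = 0$, and the leaf-edge fingerprint showing distinct tuples yield distinct monomials for CFN and K3P (precisely because in these models no identifications among the $a^e_g$ are imposed).
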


As our method of investigation of mixture models in Section \ref{sec:id-trees} depends upon the existence of linear invariants, we thus focus on the JC and K2P group-based models.

Steel and Fu \cite{Steel1995} computed the dimension of
$\Span(V_T)$ for the JC model.  Hendy and Penny \cite{Hendy1996} performed a  similar
computation for the K2P model.

\begin{thm}\label{thm:lininvdim} \cite{Steel1995,Hendy1996} Let $T$ be an $n$-taxon binary tree.
Then, for the JC model on $T$, the number of distinct Fourier coordinates is the Fibonacci number $F_{2n-2}$, satisfying the recurrence
$$F_0 = 1, \, F_1 = 1, \, F_{i} = F_{i-1} + F_{i-2}.$$
That is, for the JC model $\Span(V_T)$ has dimension $F_{2n-2}$.

For the K2P model on $T$, the
number of distinct Fourier coordinates is $H_n$, satisfying the
recurrence:
$$ H_1 =1, \, H_2 = 3, \, H_i = 4H_{i-1} - 2H_{i-2}. $$
That is, for the K2P model $\Span(V_T)$ has dimension $H_n$. 
\end{thm}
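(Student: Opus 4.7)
The plan is to reduce the span computation to a combinatorial count of distinct monomials appearing in the Fourier parameterization (\ref{eq:fourierparam}), and then extract the stated formulas via a tree-decomposition recursion. Since distinct monomials in a polynomial ring are linearly independent, $\dim\Span(V_T)$ equals the number of distinct monomials $\prod_e a^e_{h_e}$ that arise as $(g_1,\dots,g_n)$ ranges over vectors with $\sum g_i=0$. Under the JC collapse the per-edge parameters reduce to two values $\{a^e_0,a^e_N\}$ (with $a^e_N:=a^e_C=a^e_G=a^e_T$), and under K2P to three values $\{a^e_0,a^e_C,a^e_R\}$ (with $a^e_R:=a^e_G=a^e_T$). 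Thus two coordinates produce the same monomial iff the ``types'' of $h_e=\sum_{i\in A_e}g_i$ agree on every split, and I want to count type-labelings $(t_e)$ of the edges that are realized by at least one labeling $(h_e)\in G^E$ satisfying $h_{e_1}+h_{e_2}+h_{e_3}=0$ at each internal vertex.

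For JC, enumerating valid triples in $\mathbb{Z}_2\times\mathbb{Z}_2$ summing to $0$ and collapsing types shows the locally valid type-triples to be $(0,0,0)$, the three permutations of $(0,N,N)$, and $(N,N,N)$. A recursive construction from the leaves shows local validity equals global realizability. I then cut the tree along some edge and define $(x_k,y_k)$ to be the number of valid labelings of a rooted subtree on $k$ leaves whose virtual root edge has type $0$ or $N$, respectively. With base $(x_1,y_1)=(1,1)$, merging two children at an internal vertex yields
\[(x_{k_1+k_2},y_{k_1+k_2}) \;=\; \bigl(x_{k_1}x_{k_2}+y_{k_1}y_{k_2},\; x_{k_1}y_{k_2}+x_{k_2}y_{k_1}+y_{k_1}y_{k_2}\bigr).\]
Using the Fibonacci identity $F_{m+n}=F_mF_{n+1}+F_{m-1}F_n$ (standard indexing), induction on $k$ shows $(x_k,y_k)=(F_{2k-1},F_{2k})$ independently of the subtree shape, and reassembling the unrooted tree gives total $F_{2n_1-1}F_{2n_2-1}+F_{2n_1}F_{2n_2} = F_{2n-1}$, which is $F_{2n-2}$ in the paper's indexing.

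For K2P the argument is parallel with three types. The locally valid type-triples are $(0,0,0)$ and the permutations of $(0,C,C)$, $(0,R,R)$, and $(C,R,R)$. An extra wrinkle is that two $R$-edges meeting at a $(0,R,R)$ vertex must carry the \emph{same} element of $\{G,T\}$ while at a $(C,R,R)$ vertex they must be \emph{distinct}; this is always consistent because the $R$-edges form a disjoint union of leaf-to-leaf paths along which a single $\{G,T\}$ choice propagates. The analogous recursion on $(\alpha_k,\beta_k,\gamma_k)$ (counts for root edge types $0,C,R$) preserves $\alpha_k=\beta_k$, and specializing one child to a single leaf yields $\alpha_k=2\alpha_{k-1}+\gamma_{k-1}$ and $\gamma_k=2(\alpha_{k-1}+\gamma_{k-1})$. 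Eliminating $\gamma$ gives $\alpha_k=4\alpha_{k-1}-2\alpha_{k-2}$, and the total for the unrooted tree is $H_n=2\alpha_{n-1}+\gamma_{n-1}=\alpha_n$, establishing the stated recurrence with $H_1=1$ and $H_2=3$.

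The main technical point is the shape-invariance of $(x_k,y_k)$ and $(\alpha_k,\gamma_k)$: a priori these could depend on how the $k$ leaves are arranged within the rooted subtree. For JC it is immediate from the closed Fibonacci form, but for K2P I would verify it by induction on $k$, checking that the values obtained from partitions $1+(k-1)$ and $2+(k-2)$ of the subtree agree, which itself reduces to the linear recurrence $\alpha_k=4\alpha_{k-1}-2\alpha_{k-2}$ being established.
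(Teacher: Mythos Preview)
The paper does not actually prove this theorem---it is quoted from \cite{Steel1995,Hendy1996} without argument---so there is no proof in the paper to compare against. Your proposal is an independent proof, and for the JC case it is complete and correct: the reduction to counting edge-type labelings valid at every internal vertex, the realizability step (implicitly using that the $\mathfrak S_3$-action on the nonzero elements of $\mathbb Z_2\times\mathbb Z_2$ preserves JC types), and the Fibonacci identity $F_{m+n}=F_mF_{n+1}+F_{m-1}F_n$ together yield the shape-invariant formula $(x_k,y_k)=(F_{2k-1},F_{2k})$ cleanly.

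For K2P your argument is correct up to the shape-invariance step, but the verification you sketch for that step is not sufficient as stated. Checking only that the partitions $1+(k-1)$ and $2+(k-2)$ agree establishes the single associativity instance $((1,1)\ast(1,1))\ast v = (1,1)\ast((1,1)\ast v)$; it does not by itself handle a decomposition such as $3+(k-3)$ or $\lfloor k/2\rfloor + \lceil k/2\rceil$. The fix is easy, though. Your merge operation
\[
(\alpha_1,\gamma_1)\ast(\alpha_2,\gamma_2)=\bigl(2\alpha_1\alpha_2+\gamma_1\gamma_2,\ 2\alpha_1\gamma_2+2\gamma_1\alpha_2\bigr)
\]
is commutative and associative (a two-line symbolic check), so every rooted binary shape on $k$ leaves yields $(1,1)^{\ast k}$ and shape-invariance follows at once. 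Alternatively, mirror your JC argument: the caterpillar recurrence has closed form $\alpha_k=\tfrac14\bigl((2+\sqrt2)^k+(2-\sqrt2)^k\bigr)$ and $\gamma_k=\tfrac{\sqrt2}{4}\bigl((2+\sqrt2)^k-(2-\sqrt2)^k\bigr)$, and one verifies directly that $2\alpha_{k_1}\alpha_{k_2}+\gamma_{k_1}\gamma_{k_2}=\alpha_{k_1+k_2}$ and $2(\alpha_{k_1}\gamma_{k_2}+\gamma_{k_1}\alpha_{k_2})=\gamma_{k_1+k_2}$.
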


Fourier coordinates for group-based models have useful
combinatorial representatives in terms of labelled or colored versions
of the underlying tree $T$.  For this representation, we associate a
color to each of the different parameter classes in the model. For
example, in the JC model, there are two parameter classes: the $A$
class (grey), and the \{$C,G,T$\} class (black).  With this choice of
colors, in the parametric description of $q_{g_1, \ldots, g_n}$ if a
parameter $a^{B|B'}_A$ occurs, then we color the edge corresponding to
the split $B|B'$ grey.  If a parameter $a^{B|B'}_C, a^{B|B'}_G,$ or
$a^{B|B'}_T$ occurs, we color the edge corresponding to the split $B|B'$
black. As shown in \cite{Steel1995}, this establishes a correspondence between distinct Fourier 
coefficients for the JC model and subforests of $T$ with the same leaf set $[n]$.

The color-coding of the underlying tree works similarly for the K2P model; here we have three classes of parameters, the $A$-class, the $C$-class, and the $\{G,T\}$-class, and hence use three colors.

\begin{ex}
  Continuing Example \ref{ex:5leaffourier}, the colored diagram
  corresponding to the Fourier coordinate $q_{CCCTG}$ for the JC model
  is shown in Figure \ref{fg: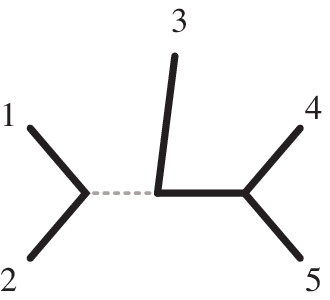}.
\begin{figure}[h]
\begin{center}
\includegraphics[height=1in]{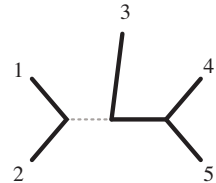}
\caption{JC-coloring for $q_{CCCTG}$. (Key: $A$-class = dashed grey, \{$C,G,T$\}-class = black)}\label{fg:fg5-jc-bw.eps}
\end{center}
\end{figure}

For the K2P model, the same Fourier coordinate is 
represented by the tri-colored tree of Figure \ref{fg: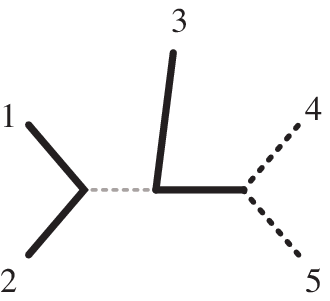}. 
\begin{figure}[h]
\begin{center}
\includegraphics[height=1in]{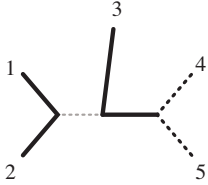}
\caption{K2P-coloring for $q_{CCCTG}$.  (Key: $A$-class = 
dashed gray, $C$-class = black, $\{G,T\}$-class
 = dashed black)}\label{fg:fg5-k2p-bw.eps}
\end{center}
\end{figure}

These diagrams are useful for determining the invariants 
that a particular group-based model satisfies.  For instance, 
one phylogenetic invariant for the JC model on the tree $T$ with split 
$12|34$ is given in Fourier coordinates by
\begin{equation}q_{CTGA}q_{ACTG} = q_{CGCG}q_{ACCA}.\label{eq:4invJC}\end{equation}
This  relationship may be represented in pictorial form by the diagram
in Figure \ref{fg: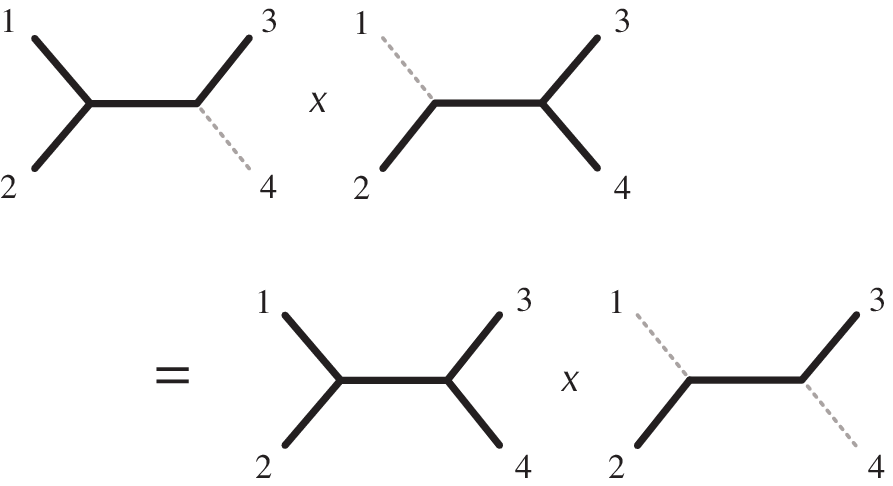}.

\begin{figure}[h]
\begin{center}
\includegraphics[height=1.5in]{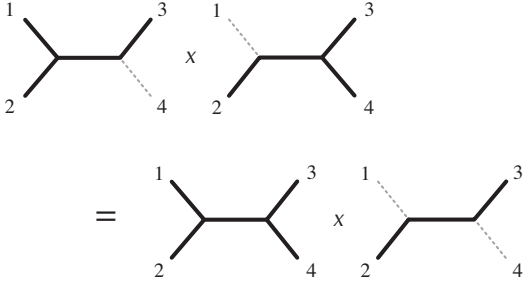}
\caption{Pictorial view of invariant \eqref{eq:4invJC} for the JC model on $T$.}\label{fg:fg-jc-invDiagram.eps}\end{center}
\end{figure}
\end{ex}


\section{Identifiability of Tree Parameters}
\label{sec:id-trees}

The goal of this section is to prove that the tree parameters are
generically identifiable for $2$-tree JC and K2P mixture models on at least 4 taxa.  For the complex varieties associated to the models
this is formalized as follows:

\begin{thm}\label{thm:maintwotree} 
  Suppose $T_1, T_2, T_3, T_4$ are binary trees, not necessarily
  distinct, on $n \geq 4$ taxa, and consider the $2$-tree mixture
  varieties for the JC and K2P models.  If $\{T_1, T_2\} \neq \{T_3,
  T_4\}$, then $V_{T_1} \ast V_{T_2} \not\subseteq V_{T_3} \ast
  V_{T_4}$.  
\end{thm}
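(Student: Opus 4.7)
The plan is to reduce to $n = 4$ via Lemma \ref{lem:subtree} and then dispatch the resulting finite list of containments for 4-taxon binary trees $T_A = T_{12|34}$, $T_B = T_{13|24}$, $T_C = T_{14|23}$ using a mix of linear Fourier invariants and a dimension count. For the reduction, given $\{T_1, T_2\} \neq \{T_3, T_4\}$ on $n \geq 5$ taxa, I would locate a 4-element $K \subseteq [n]$ with $\{T_1|_K, T_2|_K\} \neq \{T_3|_K, T_4|_K\}$ by case analysis on how the multisets diverge (one tree unmatched by either in the other pair, one tree shared with the other differing, or all four distinct), using that a binary tree is determined by its quartet restrictions.

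For the base case there are finitely many pairs of unordered pairs, and most of them succumb to linear invariants. By the colored-subforest description of Fourier coordinates, a difference $q_u - q_v$ lies in $\mathcal{I}(V_{T_k} \ast V_{T_l})$ precisely when $u$ and $v$ induce identical colored subforests on both $T_k$ and $T_l$; whenever the span-sums $\Span V_{T_i} + \Span V_{T_j}$ and $\Span V_{T_k} + \Span V_{T_l}$ are either incomparable or inverted relative to the claimed containment, a finite combinatorial search produces such a form separating the two joins. For instance, in the JC model the pair $q_{CCGG}, q_{CGGC}$ coincides on $V_{T_B}$ (on both the interior split $13|24$ is black) but differs on $V_{T_A}$ and on $V_{T_C}$, yielding a separating linear form; an analogous search using the three-color K2P palette handles that model.

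The only case outside the reach of linear invariants is $V_T \ast V_T \subseteq V_T \ast V_{T'}$ with $T \neq T'$, for which $\Span(V_T \ast V_T) = \Span V_T$ is strictly contained in $\Span V_T + \Span V_{T'} = \Span(V_T \ast V_{T'})$. My plan here is a dimension argument: if the containment held, then $V_T \ast V_T \subseteq (V_T \ast V_{T'}) \cap \Span V_T$. A generic join point $\lambda y_1 + (1-\lambda) y_2$ with $y_1 \in V_T$, $y_2 \in V_{T'}$ lies in $\Span V_T$ only when $y_2 \in V_{T'} \cap \Span V_T$, so the intersection sits inside $V_T \ast (V_{T'} \cap \Span V_T)$. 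Since Theorem \ref{thm:lininvdim} yields $\Span V_{T'} \not\subseteq \Span V_T$ (their sum is strictly larger than either), the slice $V_{T'} \cap \Span V_T$ is a proper subvariety of $V_{T'}$, and the right-hand side has dimension at most $2 \dim V_T$, strictly below the expected secant dimension $2 \dim V_T + 1$ of $V_T \ast V_T$, contradicting the containment.

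The main obstacle is therefore verifying that $V_T \ast V_T$ actually attains its expected secant dimension for the JC and K2P models, i.e., secant non-defectivity on 4 taxa; this is a finite check amenable to the computational algebra the authors employ elsewhere in the paper. The combinatorial reduction to $n = 4$ is routine in spirit but needs some care to construct a single witnessing quartet when the four trees are all distinct.
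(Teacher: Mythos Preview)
Your reduction to $n=4$ has a genuine gap. You claim that whenever $\{T_1,T_2\}\neq\{T_3,T_4\}$ on $n\geq 5$ taxa there exists a four-element $K$ with $\{T_1|_K,T_2|_K\}\neq\{T_3|_K,T_4|_K\}$, justifying this by ``a binary tree is determined by its quartet restrictions.'' That fact is about a \emph{single} tree; it does not imply that an unordered pair of trees is determined by the multiset of its quartet restrictions, and in fact it is not. Proposition~\ref{prop:quartetmatched} exhibits, for $n=6$, an explicit pair of pairs
\[
\{T_1,T_2\}\neq\{T_3,T_4\}
\]
(the trees in \eqref{eq:2-tree-pairs}) that are \emph{quartet-matched}: $\{T_1|_Q,T_2|_Q\}=\{T_3|_Q,T_4|_Q\}$ for every four-element $Q$. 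So no amount of ``case analysis on how the multisets diverge'' will produce a witnessing quartet here, and Lemma~\ref{lem:subtree} alone cannot carry the reduction.

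The paper closes this hole in two steps you are missing. First, it shows (Proposition~\ref{prop:quartetmatched}) that quartet-matched pairs with $\{T_1,T_2\}\neq\{T_3,T_4\}$ do not occur for $n=5$ and are unique up to $\mathfrak S_6$-symmetry for $n=6$; that residual $6$-taxon case is then killed by a hand-built linear invariant on six leaves (Lemma~\ref{lem:6inv}). Second, it invokes the Six-to-Infinity Theorem of Matsen--Mossel--Steel (Theorem~\ref{thm:sixToInfty}) to pass from $n\le 6$ to all $n$. Your $4$-taxon analysis (linear invariants plus a secant non-defectivity check) is in the same spirit as the paper's Lemma~\ref{lem:linearinv} and Proposition~\ref{prop:sametree}, but by itself it cannot finish the proof; you need the $6$-taxon ingredient.
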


If $\{T_1, T_2\} \neq \{T_3,
  T_4\}$, then the noncontainment of the irreducible varieties $V_{T_1} \ast V_{T_2}$ and $V_{T_3} \ast V_{T_4}$ in one another shows their intersection is a proper subvariety of strictly lower dimension. The preimage of this intersection under the complex parameterization map is thus a proper subvariety of the parameter space.   Since the subset of stochastic parameters is Zariski dense in the complex parameter space,
those stochastic parameters mapping to the intersection also lie in a closed set of Lebesgue measure 0.  Thus we obtain the main result of the section:
  
\begin{cor} For the $2$-tree JC and K2P mixture models on at least 4 taxa the tree parameters
are generically identifiable for either stochastic or complex parameters. 
\end{cor}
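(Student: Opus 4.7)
The plan is to prove Theorem~\ref{thm:maintwotree}; the corollary then follows by the short argument already sketched in the paragraph above its statement. In more detail: exchanging the roles of the two pairs in the theorem gives mutual noncontainment, so the intersection $V_{T_1}\ast V_{T_2}\cap V_{T_1'}\ast V_{T_2'}$ is a proper subvariety of the irreducible variety $V_{T_1}\ast V_{T_2}$. Its preimage under the complex parameterization is then a proper closed algebraic subset of $S_{T_1}\times S_{T_2}\times[0,1]$, hence a Lebesgue-null subset on the stochastic side by Zariski-density, so generic stochastic parameters produce distributions outside $V_{T_1'}\ast V_{T_2'}$ and in particular outside its image.

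For the theorem itself, my approach has two stages. First, reduce to $n=4$ via Lemma~\ref{lem:subtree}. I would argue combinatorially that whenever $\{T_1,T_2\}\ne\{T_3,T_4\}$ on $n\ge 4$ taxa, there is a 4-element leaf subset $K\subseteq[n]$ for which the restricted quartet multisets $\{T_1|_K,T_2|_K\}$ and $\{T_3|_K,T_4|_K\}$ also differ; this follows from the classical fact that a binary tree is determined by its induced quartets, together with a short case split on whether the two pairs share a common tree.

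Second, handle the $n=4$ case using Fourier coordinates. Only three topologies and six multisets $\{T_i,T_j\}$ appear. By Theorem~\ref{thm:lininvdim} both JC and K2P carry many linear invariants ($\dim\Span(V_T)=F_6=13$ and $H_4=34$ respectively), and since $\Span(V_{T_1}\ast V_{T_2})=\Span(V_{T_1},V_{T_2})$, for most pairs of multisets it should suffice to exhibit a Fourier coordinate that is supported by the parameterization~\eqref{eq:fourierparam} of one join but vanishes identically on the other. The colored-forest mnemonic makes such coordinates explicit, and any awkward case can be mechanized in Singular along the lines of the authors' auxiliary computations.

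The main obstacle will be distinguishing a secant $V_T\ast V_T$ from a mixed join $V_T\ast V_{T'}$ with $T\ne T'$. The noncontainment $V_T\ast V_{T'}\not\subseteq V_T\ast V_T$ is straightforward from spans, since $V_{T'}$ contributes Fourier coordinates outside $\Span(V_T)$. The opposite noncontainment cannot be detected by linear invariants. Here my plan is a Terracini-style dimension count: both varieties are irreducible of expected dimension $2\dim V_T+1$, so if non-defectivity can be verified, a strict containment would force a dimension drop while equality is already ruled out by the span discrepancy. Establishing non-defectivity---or, equivalently, exhibiting a nonlinear phylogenetic invariant of $V_T\ast V_{T'}$ that does not vanish on a generic point of the secant---is the crux of the argument and is exactly where the computational algebra tools referenced by the authors become essential.
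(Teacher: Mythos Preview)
Your reduction to $n=4$ contains a genuine error. You claim that whenever $\{T_1,T_2\}\ne\{T_3,T_4\}$ there is a quartet $K$ with $\{T_1|_K,T_2|_K\}\ne\{T_3|_K,T_4|_K\}$, and you justify this by the fact that a single binary tree is determined by its induced quartets. But that fact does not transfer to \emph{pairs} of trees: knowing the multiset $\{T_1|_K,T_2|_K\}$ for each $K$ does not tell you which quartets came from $T_1$ and which from $T_2$, so you cannot reconstruct the pair. In fact the paper's Proposition~\ref{prop:quartetmatched} exhibits an explicit counterexample on six taxa---the pairs in~\eqref{eq:2-tree-pairs} are quartet-matched yet distinct. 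So Lemma~\ref{lem:subtree} alone cannot carry the reduction, and this is precisely why the paper invokes the Six-to-Infinity Theorem (Theorem~\ref{thm:sixToInfty}) and then handles the residual six-taxon case with a separate linear invariant (Lemma~\ref{lem:6inv}).

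Your second stage, for $n=4$, is closer to what the paper does, and your instinct that the secant-versus-join direction $V_T\ast V_T\not\subseteq V_T\ast V_{T'}$ requires a dimension argument is correct. But you leave the non-defectivity of $V_T\ast V_T$ as a loose end to be ``mechanized''; the paper instead proves it rigorously for JC via Draisma's tropical criterion (Theorem~\ref{thm:draisma} and Lemma~\ref{prop:sametreedim}), constructing an explicit separating hyperplane in the exponent lattice, and for K2P by a direct rank check (Lemma~\ref{prop:sametreeK2}). For the other $n=4$ directions the paper gives the explicit linear invariant $f=q_{GGGG}+q_{GTGT}-q_{GGTT}-q_{GTTG}$ (Lemma~\ref{lem:linearinv}), which is a concrete realization of your span-difference idea.
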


The proof of Theorem \ref{thm:maintwotree} proceeds in three parts.  First, we show that the two tree
parameters when $T_1 = T_2$ are identifiable.  Then we focus on the
quartet trees, constructing a linear invariant that completes the proof of the result
for quartet tree mixtures.  Finally, we combine our quartet results,
the six-to-infinity theorem of Matsen, Mossel, and Steel
\cite{Matsen2008}, and a linear invariant for $6$-taxon tree mixtures, to
deduce identifiability of trees for $2$-tree mixtures on an arbitrary number of
leaves.  


\subsection{$2$-Tree Mixture with $T_1 = T_2$}

In this section, we focus on a mixture of a tree with itself; that is,
we study the secant variety $V_T \ast V_T$.  We show that $V_T \ast
V_T$ can be distinguished from any $2$-tree mixture variety $V_{T_1}
\ast V_{T_2}$, provided $T_1$ and $T_2$ are not both $T$.

\begin{prop}\label{prop:sametree}
  Let $T_1, T_3, T_4$ be three binary trees, not necessarily
  distinct, with $n \geq 4$ leaves, such that $\{T_1\} \neq \{T_3,
  T_4\}$.  Then under both the JC and K2P $2$-tree mixture
  models, $V_{T_1} \ast
  V_{T_1} \not\subseteq V_{T_3} \ast V_{T_4}$ and $V_{T_3} \ast
  V_{T_4} \not\subseteq V_{T_1} \ast V_{T_1}$.
\end{prop}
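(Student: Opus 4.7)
The strategy is to reduce to $n = 4$ via marginalization, then analyze the three quartet topologies using linear invariants from the Fourier structure. The hypothesis $\{T_1\} \neq \{T_3, T_4\}$ forces some $T_i \neq T_1$, and since binary trees are determined by their induced quartets, there is a 4-taxon subset $K \subseteq [n]$ with $T_i|_K \neq T_1|_K$; the hypothesis is preserved on $K$, and Lemma~\ref{lem:subtree} lifts any quartet-level non-containment back to the $n$-taxon setting. It thus suffices to prove the result for $n = 4$.

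In the quartet case, label the three binary trees $T_a = 12|34$, $T_b = 13|24$, $T_c = 14|23$, and set $T_1 = T_a$ without loss of generality. When neither $T_3$ nor $T_4$ equals $T_a$, under JC the linear form $f = q_{CCCC} + q_{CCGG} - q_{CGCG} - q_{CGGC}$ vanishes on both $V_{T_b}$ and $V_{T_c}$ (the identifications among the four ``all-nonzero-leaves'' super-classes $A_1, A_2, A_3, A_4$ make the alternating sum telescope), and hence on $V_{T_3} \ast V_{T_4}$; on $V_{T_a}$ it restricts to $2a^1 a^2 a^3 a^4(1 - a^{12|34})$, generically nonzero. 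For K2P an analogous distinguishing linear invariant is produced by a parallel class-indicator analysis, verifying that $\Span V_{T_a} \not\subseteq \Span V_{T_b} + \Span V_{T_c}$ (e.g., the vector supported on the K2P-class $\{GGGG, TTTT, GGTT, TTGG\}$ lies in $\Span V_{T_a}$ but admits no decomposition in $\Span V_{T_b} + \Span V_{T_c}$, yielding by duality a linear functional that vanishes on $V_{T_b} \ast V_{T_c}$ but not on $V_{T_a}$). The reverse non-containment $V_{T_3} \ast V_{T_4} \not\subseteq V_{T_a} \ast V_{T_a}$ follows because $\Span V_{T_a}, \Span V_{T_b}, \Span V_{T_c}$ are pairwise distinct subspaces of equal dimension, so some point of $V_{T_3}$ lies outside $\Span V_{T_a} = \Span(V_{T_a} \ast V_{T_a})$.

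The main obstacle is the remaining case, where exactly one of $T_3, T_4$ equals $T_a$; say $T_3 = T_a$ and $T_4 = T_b$. The direction $V_{T_a} \ast V_{T_b} \not\subseteq V_{T_a} \ast V_{T_a}$ still follows from the span argument, since $\Span V_{T_b} \not\subseteq \Span V_{T_a}$. But the reverse direction $V_{T_a} \ast V_{T_a} \not\subseteq V_{T_a} \ast V_{T_b}$ cannot be resolved by linear invariants, because the ambient linear spans coincide at $\Span V_{T_a} + \Span V_{T_b}$. My plan here is to exhibit a higher-degree polynomial vanishing on the asymmetric join but not on the secant, most efficiently via a direct Singular computation referenced in the supplementary materials; alternatively, a Terracini-style dimension argument observes that both varieties have the same expected dimension $2(2n-3)+1$, so a containment between these irreducibles would force equality, which can be ruled out by producing a single point of the secant $V_{T_a} \ast V_{T_a}$ admitting no decomposition as a point of $V_{T_a}$ mixed with a point of $V_{T_b}$.
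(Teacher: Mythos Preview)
Your span/linear-invariant arguments for the direction $V_{T_3}\ast V_{T_4}\not\subseteq V_{T_1}\ast V_{T_1}$ are fine and match the paper's approach. The reduction to $n=4$ via Lemma~\ref{lem:subtree} is also legitimate (the paper uses this for K2P, while for JC it works directly for all $n$). The JC linear form you write down does what you claim.

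The genuine gap is the hard direction $V_{T_a}\ast V_{T_a}\not\subseteq V_{T_a}\ast V_{T_b}$. Your ``Terracini-style'' argument does not go through as stated: knowing that two irreducible varieties share the same \emph{expected} dimension tells you nothing about containment. What you need is $\dim(V_{T_a}\ast V_{T_a})\geq \dim(V_{T_a}\ast V_{T_b})$, and since the right-hand side is bounded above by the expected dimension, this amounts to showing the secant $V_{T_a}\ast V_{T_a}$ is \emph{non-defective}. That is exactly the technical core of the paper's proof: Lemma~\ref{prop:sametreedim} (JC, all $n$, via Draisma's tropical lower bound for secants of toric varieties) and Lemma~\ref{prop:sametreeK2} (K2P, $n=4$, via a direct rank computation on the exponent matrix). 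Once non-defectivity is in hand, containment of irreducibles would force equality, which contradicts the reverse non-containment you already have from spans---so no point-finding is needed. Your fallback of ``producing a single point of the secant admitting no decomposition'' is circular (that \emph{is} the non-containment you want), and the appeal to an unspecified Singular computation is not a proof. You should replace the vague dimension remark with an actual proof that the quartet secant has dimension $11$ (JC) or $21$ (K2P); Draisma's hyperplane-splitting criterion, as in the paper, is the cleanest route.
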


\begin{proof}  Assume $T_3\ne T_1$. By \cite{Steel1995} and \cite{Hendy1996}, for both the
  unmixed JC and K2P models, 
  there exists a linear invariant $l \in \mathcal{I}(V_{T_1})
  \setminus \mathcal{I}(V_{T_3})$.  Since the set of linear invariants of $V_{T_1}$ and
  $V_{T_1} \ast V_{T_1}$ coincide (the varieties have the same span),
there exists a linear invariant $l \in
  \mathcal{I}(V_{T_1} \ast V_{T_1}) \setminus \mathcal{I}(V_{T_3})$.
  Hence, $V_{T_3} \not\subseteq V_{T_1} \ast V_{T_1}$.  Now since
  $V_{T_3} \subseteq V_{T_3} \ast V_{T_4}$, it follows that $V_{T_3} \ast
  V_{T_4} \not\subseteq V_{T_1} \ast V_{T_1}$.

It remains to show $V_{T_1} \ast V_{T_1} \not\subseteq V_{T_3} \ast V_{T_4}$. 
In fact, it is enough to show that
\begin{equation}\dim V_{T_3} \ast V_{T_4} \leq \dim V_{T_1} \ast V_{T_1}.\label{eq:dimineq}\end{equation} Indeed,  if
 this inequality holds strictly, the claim is obvious. If, on the other hand, the dimensions are equal, then
 since both of the joins are irreducible varieties, $V_{T_1} \ast V_{T_1} \subseteq V_{T_3} \ast V_{T_4}$  would imply equality of varieties, contradicting the anti-containment already established.
  
Now a simple bound on the dimension of a join, coming from its natural parameterization, is
$$\dim V \ast W \leq  \dim V + \dim W + 1,$$ where the quantity on the right is called the \emph{expected dimension}  when it is no larger than
the dimension of the ambient space.
 In the case of the JC model, $\dim(V_{T_3})=\dim(V_{T_4})=2n-3$,  which shows $\dim(V_{T_3}\ast V_{T_4})\le 4n-5$. 
Similarly, $ \dim (V_{T_3} \ast V_{T_4}) \leq 8n-11$ for the K2P model.

To complete the proof of the claim of Proposition \ref{prop:sametree} for the JC model by establishing inequality \eqref{eq:dimineq}, it suffices to show the secant variety has the expected dimension, as
given by the following:

\begin{lemma}\label{prop:sametreedim} 
If $T$ is an $n$-taxon binary tree then
$\dim V_T \ast V_T = 4n-5$ for the JC model. 
\end{lemma}

As the proof of this lemma is more involved, we defer it until after our current argument.
For the K2P model, we prove below a weaker claim.

\begin{lemma}\label{prop:sametreeK2}
  If $T$ is a 4-taxon binary tree, then
  $\dim V_T \ast V_T
  = 21$ for the K2P model.
\end{lemma}

This is sufficient to complete the proof  of the claim of Proposition \ref{prop:sametree} in the K2P case for 4-taxon trees. Larger trees are then treated by considering marginalizations to induced 4-taxon trees: Choose a set $K$ of 4 taxa for which the induced quartet trees  $T_1|_K,T_3|_K,T_4|_K$ are not all the same, and then apply Lemma \ref{lem:subtree}.
\end{proof}

\smallskip

To prove Lemma \ref{prop:sametreedim}, we make
use of a special case of the tropical secant varieties theory of
Draisma \cite{Draisma2008} and the fact that the varieties $V_T$ are
toric varieties.  To explain Draisma's result (Theorem \ref{thm:draisma} below), we introduce some background material on toric varieties and convex geometry.

Recall that a toric variety is specified as the image of a polynomial map, each of whose coordinate functions is a monomial.  As a monomial is of the form $x^u = x_1^{u_1} x_2^{u_2} \cdots x_d^{u_d}$, we associate to each monomial a non-negative integer vector $u$.  To a toric variety, we associate a collection of non-negative integer vectors $A \subset \mathbb{N}^d$, one vector for each monomial appearing in the parameterization.  We also identify $A$ with a matrix whose columns are the given set. The toric variety is often denoted $V_A$.  Algebraic and geometrical properties of toric varieties are reflected in corresponding properties of the vector configuration $A$ \cite{Fulton1993,Sturmfels1996}.

By a hyperplane in $\mathbb{R}^d$, we mean a linear hypersurface $H = \{ x \in \mathbb{R}^d :  c^Tx = e \}$.  The complement $\mathbb{R}^d \setminus H$ consists of two connected components, which we denote by $H^+$ and $H^-$.  

\begin{thm}[\cite{Draisma2008}] \label{thm:draisma}
  Let $V_A$ be a projective toric variety, with corresponding set of
  exponent vectors $A \subseteq \mathbb{N}^d$.  Suppose that $A$ has
  rank $r$, so that $\dim V_A = r-1$.  Let $H$ be a hyperplane  not intersecting $A$.  Let $A^+ = A
  \cap H^+$ and $A^- = A \cap H^-$.  Then $\dim V_A \ast V_A \geq {\rm
    rank} \, A^+ + {\rm rank} \, A^- - 1$.  In particular, if there
  exists an $H$ such that ${\rm rank} \, A^+ = {\rm rank} \, A^- =
  {\rm rank} \, A$, then $V_A$ has the expected dimension.
\end{thm}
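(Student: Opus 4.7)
My plan is to prove Draisma's bound via a one-parameter toric degeneration of the secant parameterization, governed by the separating hyperplane $H$. The key idea is to produce a rescaled family $\psi_t$ whose image is projectively equivalent to $V_A \ast V_A$ for all $t \neq 0$, and whose $t \to \infty$ limit has coordinates that decouple cleanly into $A^+$-depending and $A^-$-depending factors.

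First I would parameterize $V_A \ast V_A$ as the image closure of $\phi(x,y,\lambda) = [\lambda x^a + (1-\lambda) y^a]_{a \in A}$. Replacing $A$ with a translate (a coordinate change that multiplies every monomial by a common factor $x^{-a_0}$) allows me to assume $e = 0$, so $c^T a > 0$ on $A^+$ and $c^T a < 0$ on $A^-$. Then, for $t \in \cc^\ast$, I would reparameterize $(x,y) \mapsto (t^c x, t^{-c} y)$ and act on $\pp^{|A|-1}$ by the torus automorphism $z_a \mapsto t^{-|c^Ta|} z_a$. The composite
\[\psi_t(x,y,\lambda)_a = \lambda\, t^{c^Ta - |c^Ta|}\, x^a + (1-\lambda)\, t^{-c^Ta - |c^Ta|}\, y^a\]
parameterizes a projective translate $V_t$ of $V_A \ast V_A$, so $\dim V_t = \dim V_A \ast V_A$ for every $t \neq 0$. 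A direct computation then gives the limit
\[\psi_\infty(x,y,\lambda)_a = \begin{cases} \lambda x^a, & a \in A^+, \\ (1-\lambda) y^a, & a \in A^-, \end{cases}\]
since in each case one of the two exponents of $t$ vanishes while the other is strictly negative.

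The image of $\psi_\infty$ is the projective join $V_{A^+} \ast V_{A^-}$, with $V_{A^+}$ sitting in the $A^+$-coordinate subspace and $V_{A^-}$ in the complementary $A^-$-subspace of $\pp^{|A|-1}$. Because these two linear subspaces are complementary, the join has dimension $\dim V_{A^+} + \dim V_{A^-} + 1 = (\mathrm{rank}\, A^+ - 1) + (\mathrm{rank}\, A^- - 1) + 1 = \mathrm{rank}\, A^+ + \mathrm{rank}\, A^- - 1$. To conclude, I would form $W = \overline{\bigcup_{t \in \cc^\ast} V_t \times \{t\}} \subset \pp^{|A|-1} \times \pp^1$, the total space of the family; $W$ is irreducible of dimension $\dim V_A \ast V_A + 1$, so its fiber $W_\infty$ at $t = \infty$ has dimension $\dim V_A \ast V_A$. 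Since $\mathrm{Im}(\psi_\infty) \subseteq W_\infty$ by continuity and closedness, we obtain $\dim V_A \ast V_A \geq \dim \mathrm{Im}(\psi_\infty) = \mathrm{rank}\, A^+ + \mathrm{rank}\, A^- - 1$.

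The hard part will be verifying rigorously that $\mathrm{Im}(\psi_\infty)$ attains the claimed dimension rather than collapsing under an unnoticed coincidence in the limit, and that the fiber $W_\infty$ is really equidimensional of dimension $\dim V_A \ast V_A$ (no dimension jump at the special point). The cleanest framework is the one Draisma uses: the weight vector $(c^T a)_{a \in A}$ on the ambient coordinates of $\pp^{|A|-1}$ induces a Gr\"obner/tropical degeneration of the ideal $\mathcal{I}(V_A \ast V_A)$, whose initial ideal cuts out precisely the join of $V_{A^+}$ and $V_{A^-}$ described above; standard flatness of weight-based Gr\"obner degenerations then preserves Krull dimension and closes the argument.
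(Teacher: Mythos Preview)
The paper does not prove this theorem: it is quoted from Draisma's paper \cite{Draisma2008} and used as a black box to establish Lemmas~\ref{prop:sametreedim} and~\ref{prop:sametreeK2}. So there is no ``paper's own proof'' to compare against, and your sketch is in fact an outline of Draisma's original argument (one-parameter toric/tropical degeneration of the secant via a weight vector coming from the separating hyperplane), as you yourself note in the final paragraph.

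Your outline is sound. One remark on the point you flag as the ``hard part'': the worry that the special fiber $W_\infty$ might have a dimension jump is unnecessary here. Since $W$ is the closure of an irreducible family, it is integral, and any dominant morphism from an integral scheme to a smooth curve is automatically flat (this is the standard criterion, e.g.\ Hartshorne~III.9.7). Hence every fiber of $W\to\pp^1$ has dimension exactly $\dim V_A\ast V_A$, and the inclusion $\overline{\mathrm{Im}(\psi_\infty)}\subseteq W_\infty$ immediately yields the bound. So the Gr\"obner-degeneration machinery you invoke at the end, while a perfectly good alternative framework, is not strictly needed to close the argument.

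A small technical caution on your translation step ``replace $A$ by a translate so $e=0$'': this is harmless for the projective varieties $V_A$, $V_{A^+}$, $V_{A^-}$ (and hence for their dimensions and the ranks as the paper uses them), but you should be explicit that ``rank'' here is $1+\dim(\text{affine span})$, which is translation-invariant, rather than the dimension of the linear span, which is not.
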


\begin{proof}[Proof of Lemma \ref{prop:sametreedim}]
To apply Theorem \ref{thm:draisma}, we must investigate the vector configurations $A$ associated with the Jukes-Cantor model and find a hyperplane with the desired properties.  For the Jukes-Cantor model on a binary tree, each distinct Fourier coordinate corresponds to a subforest $F$ of the tree, and the corresponding monomial has the form
$$
\prod_{e \in F}  a^e_{C}  \prod_{e \notin F} a^e_A.
$$
(Here we consider the $a^e_A$ as variables, rather than setting them to be 1. This simply homogenizes our parameterization.)
The vector corresponding to $F$ is in $\mathbb{N}^{4n -6}$; specifically, $u^F = (x_e, y_e)_{e \in \Sigma(T)}$ such that $x_e = 1$ and $y_e = 0$ if $e \in F$, and $x_e = 0$ and $y_e = 1$ if $e \notin F$.  For example, in the case that $n = 4$, and $T$ is the tree with nontrivial split $12|34$, then, after removing repeated columns, $A$ consists of the columns of the $10 \times 13$ matrix:

{\small $$
\left( \begin{array}{ccccccccccccc}
0 & 1 & 1 & 1 & 0 & 0 & 0 & 1 & 1 & 1 & 0 & 1 & 1 \\
0 & 1 & 0 & 0 & 1 & 1 & 0 & 1 & 1 & 0 & 1 & 1 & 1 \\
0 & 0 & 1 & 0 & 1 & 0 & 1 & 1 & 0 & 1 & 1 & 1 & 1 \\
0 & 0 & 0 & 1 & 0 & 1 & 1 & 0 & 1 & 1 & 1 & 1 & 1 \\
0 & 0 & 1 & 1 & 1 & 1 & 0 & 1 & 1 & 1 & 1 & 0 & 1 \\
1 & 0 & 0 & 0 & 1 & 1 & 1 & 0 & 0 & 0 & 1 & 0 & 0 \\
1 & 0 & 1 & 1 & 0 & 0 & 1 & 0 & 0 & 1 & 0 & 0 & 0 \\
1 & 1 & 0 & 1 & 0 & 1 & 0 & 0 & 1 & 0 & 0 & 0 & 0 \\
1 & 1 & 1 & 0 & 1 & 0 & 0 & 1 & 0 & 0 & 0 & 0 & 0 \\
1 & 1 & 0 & 0 & 0 & 0 & 1 & 0 & 0 & 0 & 0 & 1 & 0 
\end{array} 
\right).
$$}The first 2 rows here correspond to the $x_e$ for edges in one cherry on the tree, the next 2 to $x_e$ for edges in the other cherry, and the 5th to $x_e$ for the central edge; the last 5 correspond to the $y_e$, with edges in the same order. Thus, the first column corresponds to the empty forest, the second to the first cherry, and so on.

Consider the hyperplane  $H = \{(x_e,y_e) \in \mathbb{R}^{4n-6} : \sum_{e \in \Sigma(T)} x_e = |\Sigma(T)|
  -3/2$.
  This means that the vectors on one side of the partition will
  correspond to subforests of $T$ having at most one edge of $T$
  missing.  The subforests on the other side will have at least two
  edges missing.  Call the first set of vectors $A^+$ and the second set
  of vectors $A^-$.  In the matrix above, $A^+$ consists of the last six columns and $A^-$ consists of the first seven columns.

  The first set $A^+$ contains exactly $|\Sigma(T)| + 1$ vectors, since the tree
  itself is a subforest and removing any edge always produces a
  subforest.  This set thus forms the vertices of a simplex of
  dimension equal to the number of edges, so $A^+$ has rank $2n-2$.  
  
  The second set, $A^-$, contains the empty graph and all
  paths between pairs of vertices.  If we restrict attention to only those vectors corresponding to the paths between pairs of vertices, this gives us the exponent vectors of the toric varieties corresponding to toric degenerations of the Grassmannian \cite{Speyer2004}, which has rank $2n-3$.  Adding the vector corresponding to the empty subforest increases the rank by one.
\end{proof}

\smallskip

\begin{proof}[Proof of Lemma \ref{prop:sametreeK2}]
To apply Theorem \ref{thm:draisma}, we must investigate the vector configurations $A$ associated with the K2P model and find a hyperplane with the desired properties.
For $n = 4$, there are $H_4 = 34$ distinct Fourier coordinates.  We focus on the tree with the unique nontrivial split $12|34$.  Each monomial in the Fourier parameterization has the form
$$
a^1_{g_1} a^2_{g_2} a^3_{g_3} a^4_{g_4} a^{12|34}_{g_1 + g_2},
$$
where $a^e_{G} = a^e_T$ for all splits $e$.  This implies that the matrix $A$ is a $15 \times 34$ matrix.   The coordinates on $\mathbb{R}^{15}$ are $x_e, y_e, z_e$, where for a given edge $e$,  
$$(x_e, y_e, z_e) = \left\{  \begin{array}{cl}
(1,0,0) & \mbox{ if } g_e = A \\
(0,1,0) & \mbox{ if } g_e = C \\
(0,0,1) & \mbox{ if } g_e = G,T.
\end{array} \right.
$$
Since the model has dimension $10$, we see that the matrix $A$ has rank $11$. 

Now consider the hyperplane $H = \{ (x_e, y_e, z_e) \in \mathbb{R}^{15} : \sum_{e \in \Sigma(T)} y_e + z_e  = 7/2 \}$.  A direct calculation shows that this partitions $A$ into $A^+$ and $A^-$ each with rank $11$, and completes the proof.
\end{proof}


\subsection{Linear Invariants for Quartet Mixtures}

We next focus on quartet trees.  The three
fully-resolved quartet trees will be indicated by their non-trivial splits: $T_{12|34}$, $T_{13|24}$, and
$T_{14|23}$.  The main result of this section is that linear invariants can generically identify 2-quartet mixtures.

\begin{lemma}\label{lem:linearinv}
  For both the JC and K2P models, the linear polynomial
$$f  =  q_{GGGG} + q_{GTGT} - q_{GGTT} - q_{GTTG}$$
satisfies $f \in \mathcal{I}(V_{T_{12|34}} \ast V_{T_{14|23}})
\setminus \mathcal{I}(V_{T_{13|24}}).$

\end{lemma}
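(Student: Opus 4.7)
The plan is to verify the statement by direct computation with the Fourier parameterization (\ref{eq:fourierparam}). The polynomial $f$ is engineered so that the K2P constraint $a^e_G = a^e_T$ makes it vanish on the two trees whose cherry structure is compatible with the pairings $\{G,T\}$ appearing in its indices, while the third tree has an ``incompatible'' cherry structure that leaves a nonzero residue.

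First, I would observe that each of the four indices $GGGG$, $GTGT$, $GGTT$, $GTTG$ sums to $0$ in $\mathbb{Z}_2\times\mathbb{Z}_2$, so none of the four Fourier coordinates is automatically zero in any of the three quartet parameterizations; what differs is only the internal-edge parameter $a^{ab|cd}_{g_a+g_b}$ and how the leaf parameters combine. For $T_{12|34}$, I would pair the monomials so that the internal-edge parameter is constant within each pair:
\begin{align*}
q_{GGGG} - q_{GGTT} &= a^{12|34}_A\, a^1_G a^2_G \bigl(a^3_G a^4_G - a^3_T a^4_T\bigr),\\
q_{GTGT} - q_{GTTG} &= a^{12|34}_C\, a^1_G a^2_T \bigl(a^3_G a^4_T - a^3_T a^4_G\bigr).
\end{align*}
Both parenthesized factors vanish under $a^e_G = a^e_T$, so $f \in \mathcal I(V_{T_{12|34}})$. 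A symmetric pairing with respect to $g_1 + g_4$ for $T_{14|23}$ gives the analogous identity, so $f \in \mathcal I(V_{T_{14|23}})$.

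For the join, I would invoke the remark made in the preceding section: since $\operatorname{Span}(V \ast W) = \operatorname{Span}(V,W)$, a linear polynomial vanishes on $V_{T_{12|34}} \ast V_{T_{14|23}}$ if and only if it vanishes on both factors. Combined with the two identities above, this yields the first half of the claim.

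For the noncontainment, I would evaluate $f$ on $V_{T_{13|24}}$ by pairing via $g_1 + g_3$: this groups $q_{GGGG}$ with $q_{GTGT}$ (both carrying $a^{13|24}_A$) and $q_{GGTT}$ with $q_{GTTG}$ (both carrying $a^{13|24}_C$). Under $a^e_G = a^e_T$, each pair collapses to $2\,a^1_G a^2_G a^3_G a^4_G$ times the corresponding internal parameter, yielding
\[
f\big|_{V_{T_{13|24}}} \;=\; 2\, a^1_G a^2_G a^3_G a^4_G \bigl(a^{13|24}_A - a^{13|24}_C\bigr),
\]
which is a nonzero polynomial in the Fourier parameters (the internal-edge parameters are independent variables for the K2P model, and so also for JC). Hence $f \notin \mathcal I(V_{T_{13|24}})$.

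There is no substantive obstacle: the entire argument is bookkeeping once one notices that the four chosen indices split into two ``matching'' pairs for each quartet topology according to whether the internal split $g_a + g_b$ lands in $A$ or $C$, and that the constraint $a^e_G = a^e_T$ kills the intra-pair differences exactly for the two trees whose cherries respect the $\{G,T\}$-pairing encoded in $f$. The only mild care needed is the translation between the nucleotide labels $A,C,G,T$ and the group elements $(0,0),(0,1),(1,0),(1,1)$ when computing the sums $g_a + g_b$.
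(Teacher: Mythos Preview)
Your proof is correct and follows essentially the same direct-computation approach as the paper: both verify the vanishing of $f$ on the two trees via the Fourier parameterization and the constraint $a^e_G=a^e_T$, and both check non-vanishing on $T_{13|24}$ by exhibiting a nonzero residual. The only minor differences are presentational: you reduce the join statement to the two individual varieties via $\Span(V\ast W)=\Span(V,W)$, whereas the paper writes out the mixture parameterization $\pi(\cdots)+(1-\pi)(\cdots)$ directly; and you give the explicit factored form $2\,a^1_G a^2_G a^3_G a^4_G(a^{13|24}_A-a^{13|24}_C)$ on $T_{13|24}$, while the paper simply asserts non-vanishing. One small wording caveat: $a^{13|24}_A$ is not an independent variable (it is normalized to $1$ in all these models), so your justification should read ``$a^{13|24}_A - a^{13|24}_C = 1 - a^{13|24}_C$ is not identically zero'' rather than appealing to independence.
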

Note the lemma further implies
$$f \in \mathcal{I}(V_{T_{12|34}} \ast V_{T_{14|23}})
\setminus \mathcal{I}(V_{T_{13|24}}\ast V_{T_{14|23}}).$$
Combining this with Proposition
\ref{prop:sametree}, we deduce a first case of Theorem \ref{thm:maintwotree}:

\begin{cor}\label{cor:quartets} The case $n=4$ of 
Theorem \ref{thm:maintwotree} holds.\end{cor}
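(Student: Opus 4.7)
My approach is to split the statement into two case types and, in each, reduce to a result already in hand. Suppose $T_1,T_2,T_3,T_4$ are quartet trees with $\{T_1,T_2\}\ne\{T_3,T_4\}$; the goal is $V_{T_1}\ast V_{T_2}\not\subseteq V_{T_3}\ast V_{T_4}$.

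First I will dispose of the cases in which at least one of the two pairs is of the form $\{T,T\}$. This is exactly what Proposition~\ref{prop:sametree} addresses, applied with $T$ playing the role of $T_1$ in that Proposition; the symmetric situation in which $\{T_3,T_4\}=\{T,T\}$ is handled by the same Proposition after interchanging the two pairs.

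The remaining task is to establish the six non-containments $V_{P_i}\not\subseteq V_{P_j}$ for $i\ne j$ among the three mixed pairs $P_1=\{T_{12|34},T_{13|24}\}$, $P_2=\{T_{12|34},T_{14|23}\}$, and $P_3=\{T_{13|24},T_{14|23}\}$. Lemma~\ref{lem:linearinv}, together with the observation immediately following its statement, directly yields $V_{P_3}\not\subseteq V_{P_2}$ via the explicit linear invariant $f$. The remaining five non-containments I plan to extract by exploiting the $S_4$-symmetry of the construction: relabeling taxa via a permutation $\sigma\in S_4$ induces a permutation of Fourier coordinates, hence a linear automorphism of $\mathbb{P}^{k^n-1}$ that carries $V_T$ to $V_{\sigma(T)}$ and $\mathcal{I}(V_T)$ to $\mathcal{I}(V_{\sigma(T)})$. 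Under this action, $S_4$ surjects onto the full symmetric group on the three quartet trees $\{T_{12|34},T_{13|24},T_{14|23}\}$: the 3-cycles $(2,3,4)$ and $(2,4,3)$ cycle them, while each of the transpositions $(2,3),(2,4),(3,4)$ fixes one quartet tree and swaps the other two. Applying each of these five non-identity permutations to $f$ produces five further linear polynomials, each witnessing one of the five remaining non-containments among the $V_{P_i}$.

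I do not expect any serious obstacle: the substantive work is already accomplished in Proposition~\ref{prop:sametree} and Lemma~\ref{lem:linearinv}. The only care required is a short bookkeeping check that, for each of the five permutations, the image of $f$ lies in the ideal of the expected mixed-pair variety and fails to lie in the ideal of another, so that together with the original $f$ all six non-containments are accounted for.
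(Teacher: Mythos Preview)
Your proposal is correct and follows essentially the same approach as the paper: the paper simply states that combining Lemma~\ref{lem:linearinv} with Proposition~\ref{prop:sametree} yields the $n=4$ case, leaving the $\mathfrak S_4$-symmetry argument implicit, while you spell it out explicitly. Your bookkeeping is sound---the induced $\mathfrak S_3$-action on the three quartet trees acts simply transitively on the six ordered pairs $(P_i,P_j)$ with $i\neq j$, so the single non-containment from Lemma~\ref{lem:linearinv} propagates to all six.
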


\begin{proof}[Proof of Lemma \ref{lem:linearinv}]
 Denote the parameters for tree $T_{12|34}$ by $a$
  and the parameters for $T_{14|23}$ by $b$.  We must show that $f =
  0$ whenever we substitute for the $q$'s the parameterization for the  mixture model.
One checks that:
$$
q_{GGGG}  =  \pi  a_G^1 a_G^2 a_G^3 a_G^4 a_A^{12|34} + (1-\pi) b_G^1 b_G^2 b_G^3 b_G^4 b_A^{14|23}, 
$$
$$
q_{GTGT}  =  \pi  a_G^1 a_T^2 a_G^3 a_T^4 a_C^{12|34} + (1-\pi) b_G^1 b_T^2 b_G^3 b_T^4 b_C^{14|23},
$$
$$
q_{GGTT}  =  \pi  a_G^1 a_G^2 a_T^3 a_T^4 a_A^{12|34} + (1-\pi) b_G^1 b_G^2 b_T^3 b_T^4 b_C^{14|23},
$$
$$
q_{GTTG}  =  \pi  a_G^1 a_T^2 a_T^3 a_G^4 a_C^{12|34} + (1-\pi) b_G^1 b_T^2 b_T^3 b_G^4 b_A^{14|23}. 
$$
Since for the K2P and JC models $a^e_G = a^e_T$, $b^e_G = b^e_T$ for all $e$, these formulae show $f = 0$, as can be checked using
color-codes trees such as in Section \ref{sec:groupBasedModels}.
 Thus  $f \in \mathcal{I}(V_{T_{12|34}} \ast
  V_{T_{14|23}})$.  

On the other hand, for the tree $T_{13|24}$ we have:
$$
q_{GGGG}  =    c_G^1 c_G^2 c_G^3 c_G^4 c_A^{13|24}
$$
$$
q_{GTGT}  =    c_G^1 c_T^2 c_G^3 c_T^4 c_A^{13|24}
$$
$$
q_{GGTT}  =    c_G^1 c_G^2 c_T^3 c_T^4 c_C^{13|24}
$$
$$
q_{GTTG}  =    c_G^1 c_T^2 c_T^3 c_G^4 c_C^{13|24}
$$
Even though in the JC model $c^e_C=c^e_G = c^e_T$  and $c^e_A=1$ for all $e$, $f$ is not identically zero when evaluated at these expressions. Thus
$f\notin \mathcal{I}(V_{T_{13|24}})$ for the JC model, and hence also for the K2P model.
\end{proof}


\subsection{From Quartets to Sextets and Beyond}

Identifiability of quartet mixtures can be used to show
identifiability for larger trees by marginalization of tree
models and their mixtures.  However, it is not, in general, possible
to identify two trees from the union of their sets of induced quartet trees. Thus this approach requires some care.
That all difficulties arise from trees of at most 6 taxa is the content of the following
combinatorial theorem of Matsen, Mossel, and Steel. 

\begin{thm}[Six-to-Infinity Theorem] \cite{Matsen2008}
\label{thm:sixToInfty}
  Suppose that the tree parameters $T_1, T_2$ are identifiable for a
  2-tree phylogenetic mixture model for binary trees with six leaves.  Then tree
  parameters are identifiable for binary trees with $\geq 6$
  leaves.
\end{thm}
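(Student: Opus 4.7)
The plan is to reduce the theorem to a purely combinatorial claim about $n$-leaf tree pairs and their $6$-leaf restrictions, using marginalization exactly as in Lemma~\ref{lem:subtree}. Arguing contrapositively, suppose $\{T_1,T_2\}\neq\{T_3,T_4\}$ are pairs of binary trees on $n\ge 6$ taxa with $V_{T_1}\ast V_{T_2}\subseteq V_{T_3}\ast V_{T_4}$. The goal is to produce a $6$-element subset $K\subseteq[n]$ with $\{T_1|_K,T_2|_K\}\neq\{T_3|_K,T_4|_K\}$; restrictions of binary trees to $K$ are again binary (after suppressing degree-$2$ vertices), so by marginalizing exactly as in the proof of Lemma~\ref{lem:subtree} we get $V_{T_1|_K}\ast V_{T_2|_K}\subseteq V_{T_3|_K}\ast V_{T_4|_K}$ for an inequivalent pair of $6$-taxon binary trees, contradicting the hypothesized identifiability at $n=6$.

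The theorem thus reduces to the combinatorial statement: \emph{if $\{T_1,T_2\}\neq\{T_3,T_4\}$ are pairs of binary trees on $n\ge 6$ taxa, then some $6$-subset $K$ satisfies $\{T_1|_K,T_2|_K\}\neq\{T_3|_K,T_4|_K\}$.} I would prove this by induction on $n$. The base case $n=6$ is immediate with $K=[6]$. For the inductive step, suppose no $6$-subset distinguishes the two pairs; then for every $(n-1)$-subset $S\subseteq[n]$ every $6$-subset of $S$ fails to distinguish the $S$-restrictions, so by the inductive hypothesis $\{T_1|_S,T_2|_S\}=\{T_3|_S,T_4|_S\}$ for every $(n-1)$-subset $S$. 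It remains to promote equality of every leaf-deletion pair to equality of the full pair $\{T_1,T_2\}=\{T_3,T_4\}$.

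The hard part is this final leaf-reinsertion step, whose obstruction is a global ``swap'' ambiguity. For each deleted leaf $i$, the equality $\{T_1|_{[n]\setminus\{i\}},T_2|_{[n]\setminus\{i\}}\}=\{T_3|_{[n]\setminus\{i\}},T_4|_{[n]\setminus\{i\}}\}$ yields one of two matchings between the trees on the two sides, but the matching can a priori be chosen independently for each $i$. My plan is to fix a matching at one distinguished leaf $i_0$, then transfer it to an arbitrary $i\neq i_0$ by comparing common $(n-2)$-leaf restrictions $[n]\setminus\{i_0,i\}$, which have at least $4$ leaves and therefore carry enough quartet data to force a consistent choice via the standard fact that a binary tree is determined by its induced quartets. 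The technical subtlety, and the reason the threshold is $6$ rather than $4$ or $5$, will be that in degenerate configurations (e.g.\ when $T_1$ and $T_2$ induce the same tree on many subsets) the $(n-2)$-overlap must still pin down the matching; handling these cases is the core content of the Matsen--Mossel--Steel argument, and I expect it to require an explicit case analysis on how the two trees of the pair can agree on a common leaf set.
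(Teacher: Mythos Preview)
The paper does not prove this theorem at all; it is quoted verbatim from \cite{Matsen2008} and used as a black box in the proof of Theorem~\ref{thm:maintwotree}. So there is no ``paper's own proof'' to compare against.

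That said, your reduction is the right one and is essentially how the cited argument is organized: marginalization (Lemma~\ref{lem:subtree}) reduces identifiability to the purely combinatorial claim that any two distinct unordered pairs of binary $n$-trees ($n\ge 6$) are already distinguished on some $6$-leaf subset, and the induction step boils this down to showing that a pair $\{T_1,T_2\}$ is determined by the multiset of its single-leaf deletions. Your plan to propagate a matching from one deleted leaf $i_0$ to another $i$ via the common $(n-2)$-leaf restriction is also the natural idea.

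The gap is that you have not actually carried out the step you yourself flag as ``the hard part.'' The difficulty is real: when $T_1|_{[n]\setminus\{i\}}=T_2|_{[n]\setminus\{i\}}$ the matching at $i$ is undefined, and when this happens for several $i$ simultaneously the overlap argument you sketch can fail to transmit information. Handling these degenerate configurations is exactly the content of the Matsen--Mossel--Steel case analysis, and it is what forces the threshold $6$ rather than $5$ (compare Proposition~\ref{prop:quartetmatched}, where the paper exhibits the unique $6$-leaf obstruction to the analogous $4$-to-infinity statement). Saying ``I expect it to require an explicit case analysis'' and then not doing it leaves the proposal as a correct outline rather than a proof; the substantive work still lies ahead.
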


Combining the results of Corollary \ref{cor:quartets} and Lemma
\ref{lem:subtree}, we have that $V_{T_1} \ast V_{T_2} \not\subseteq
V_{T_3} \ast V_{T_4}$, if there is a four element subset $Q \subseteq
[n]$ such that $\{T_1|_Q, T_2|_Q \} \neq \{T_3|_Q, T_4|_Q \}$.

It remains to show that $V_{T_1} \ast V_{T_2} \not\subseteq V_{T_3}
\ast V_{T_4}$ for pairs of trees such that $\{T_1|_Q, T_2|_Q \} =
\{T_3|_Q, T_4|_Q \}$ for \emph{all} four element subsets $Q \subseteq [n]$.
Let $\mathcal Q(T_i,T_j)$ denote the multiset of all quartet trees
$T_i|_Q$, $T_j|_Q$ induced by $T_i$ and $T_j$. We say two pairs of
trees $T_1, T_2$ and $T_3, T_4$ are \emph{quartet-matched} if
$\mathcal Q(T_1,T_2)=\mathcal Q(T_3,T_4)$.

\begin{prop}\label{prop:quartetmatched} 
  For $n = 5$ leaves, any two quartet-matched pairs of trees $T_1, T_2$
  and $T_3, T_4$ has $\{T_1, T_2\} = \{T_3, T_4\}$.  For $n = 6$
  leaves, every quartet-matched pair of trees with $\{T_1, T_2\} \neq
  \{T_3, T_4\}$ is equivalent, up to $\mathfrak S_6$ symmetry, to the pairs
  defined by
  \begin{align} \label{eq:2-tree-pairs}
    T_1 &= \{ 12|3456, 123|456, 1234|56\},\notag\\ \quad T_2 &= \{13|2456, 123|456, 1235|46 \}, \\
    T_3 &= \{ 13|2456, 123|456, 1234|56\},\notag\\ \quad T_4 &=
    \{12|3456, 123|456, 1235|46 \}\notag.
\end{align}
\end{prop}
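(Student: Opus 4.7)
The plan is to analyze, at the level of induced quartets on 4-subsets, when two pairs of trees can share the same quartet multiset. If $\mathcal{Q}(T_1,T_2)=\mathcal{Q}(T_3,T_4)$, then for every 4-subset $Q$ the unordered pairs $\{T_3|_Q,T_4|_Q\}$ and $\{T_1|_Q,T_2|_Q\}$ agree, so any alternative pair $(T_3,T_4)$ is obtained from $(T_1,T_2)$ by a ``mix'': for each $Q$ where $T_1|_Q\neq T_2|_Q$, we pick one of the two displayed quartets for $T_3$ and the other for $T_4$. The trivial ``all-$T_1$'' and ``all-$T_2$'' mixes yield $\{T_3,T_4\}=\{T_1,T_2\}$; the task is to rule out nontrivial mixes for $n=5$ and classify them for $n=6$.

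For $n=5$: Every binary tree on $[5]$ is a caterpillar with exactly two non-trivial splits, both of shape $(2,3)$, compatible iff their 2-parts are disjoint 2-subsets of $[5]$. A valid mixed tree $T_3$ must itself have 2 compatible splits, uniquely determined by $T_3$'s displayed quartets. The cleanest packaging uses the split multiset $\mathcal{S}(T_1)\sqcup\mathcal{S}(T_2)$ of four 2-subsets of $[5]$: any quartet-matched alternative pair corresponds to an alternative partition of this multiset into 2 pairs of disjoint 2-subsets. Repeated 2-subsets force uniqueness directly. For 4 distinct 2-subsets $e_1,e_2,e_3,e_4$, two valid partitions $\{\{e_1,e_2\},\{e_3,e_4\}\}$ and $\{\{e_1,e_3\},\{e_2,e_4\}\}$ would force $e_1\cap e_2 = e_3\cap e_4 = e_1\cap e_3 = e_2\cap e_4 = \emptyset$. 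Then both $e_1$ and $e_4$ lie in $[5]\setminus(e_2\cup e_3)$, a set of size at most $2$, so $e_1=e_4$ (or $e_2=e_3$), contradicting distinctness.

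For $n=6$: Each binary tree has 3 non-trivial splits of shapes $(2,4)$ and $(3,3)$, so the combined split multiset has 6 elements. The plan is to classify quartet-matched alternatives up to $\mathfrak{S}_6$ by structural case analysis on the pattern of shared splits and on shapes, or by direct computer enumeration over the $105$ binary trees on $[6]$ (about $5500$ unordered pairs). The exceptional pair in (\ref{eq:2-tree-pairs}) has all four trees sharing the central $(3,3)$ split $123|456$, and the $(2,4)$ ``peripheral'' splits form swappable pairs on each side: $12|3456$ vs.\ $13|2456$ on the $\{1,2,3\}$ side, and $1234|56$ vs.\ $1235|46$ on the $\{4,5,6\}$ side. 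Either assignment of the peripheral splits remains compatible with $123|456$, yielding the alternative pair.

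The main obstacle is the $n=6$ case: verifying that (\ref{eq:2-tree-pairs}) is, up to $\mathfrak{S}_6$ symmetry, the only exceptional quartet-matched pair requires either delicate structural case analysis on the shapes and sharing patterns of the 6 splits, or direct computer enumeration. By contrast, the $n=5$ case is handled cleanly by the pigeonhole argument above.
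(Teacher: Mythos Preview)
Your $n=5$ argument is a genuinely different and more streamlined approach than the paper's. However, it has one unjustified step: you assert that a quartet-matched alternative pair $(T_3,T_4)$ must correspond to a repartition of the \emph{same} split multiset $\mathcal{S}(T_1)\sqcup\mathcal{S}(T_2)$, but from $T_3|_Q\in\{T_1|_Q,T_2|_Q\}$ for all $Q$ it does not immediately follow that every split of $T_3$ is already a split of $T_1$ or $T_2$. This gap is easy to close with a cherry count: if $\{a,b\}$ is a cherry of $T_3$, it is a cherry in all three induced quartets $T_3|_Q$ with $a,b\in Q$; but if $\{a,b\}$ were a cherry of neither $T_1$ nor $T_2$, it would be a cherry in at most one $T_1|_Q$ and at most one $T_2|_Q$, so at most two of those three $Q$ could satisfy $T_3|_Q\in\{T_1|_Q,T_2|_Q\}$. (This cherry count is in fact how the paper opens its own $n=5$ argument, though it then proceeds by direct case analysis on shared cherries rather than your pigeonhole on disjoint $2$-subsets.) With that step in place, your disjointness argument is correct and slicker than the paper's casework.

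For $n=6$ you have only a plan, not a proof, and you say so yourself. The paper actually carries out the combinatorial case analysis: it uses the $n=5$ result to pass to the multiset $\mathcal{F}(T_1,T_2)$ of induced $5$-taxon subtrees, counts cherries in $\mathcal{F}$ to determine which cherries lie in both, one, or neither tree, and then branches on whether the trees share a cherry and, if not, on the number of distinct cherries ($6$, $5$, or $4$) together with their compatibility pattern. The final $4$-cherry subcase is where the exceptional pair emerges. This is several paragraphs of careful casework and constitutes the bulk of the proposition's proof; invoking ``computer enumeration'' or ``delicate case analysis'' without doing either leaves the $n=6$ claim unproved. Note also that your split-multiset reduction does not transfer cleanly to $n=6$: there are now both $(2,4)$ and $(3,3)$ splits, trees come in two topological types (caterpillar and balanced), and the assertion that the combined split multisets of $\{T_1,T_2\}$ and $\{T_3,T_4\}$ must coincide would itself require an argument.
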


\begin{proof} Fix two binary trees $T_1,T_2$ with $n$ leaves.
If the trees are identical, the result is clear, so we assume throughout $T_1\ne T_2$.

\medskip

Consider first $n=5$.

If leaves $j,k$ form a cherry in $T_i$, then they will also form a
cherry in all 3 quartet trees including $j,k$ induced from $T_i$. On
the other hand, if they do not form a cherry in $T_i$, then they will
form a cherry in either 0 or 1 of these 3 induced quartet trees.  Thus
by counting the elements of the multiset $\mathcal Q(T_1,T_2)$ with
each possible cherry $j,k$, we can determine which cherries occur in
both trees (count 6), which occur in exactly one tree (count 3 or 4),
and which occur in no trees (count 0, 1, or 2).

If a cherry occurs in both trees, suppose it is $\{1,2\}$. Then from
considering the quartets on $\{2,3,4,5\}$ both $T_1$ and $T_2$ are
determined.

If the two trees have no cherry in common, then we know the 4 distinct
cherries that occur in the 2 trees. If only 4 taxa occur in these 4
cherries, then we may uniquely pair them according to their
compatibility, and the two 5-taxon trees $T_1$ and $T_2$ are
determined.  If all 5 taxa occur in these cherries, since the cherries
are distinct we may assume they are $\{1,2\}$ and $\{3,4\}$ (from one
tree), and $\{1,5\}$ and $\{2,3\}$ (from the other), though we
initially do not know which come from which tree. However, we again
see that these can be uniquely paired for compatibility, and thus
$T_1$ and $T_2$ are determined.

\smallskip

Now consider $n=6$.

By the $n=5$ case, we may determine the multiset $\mathcal F=\mathcal
F(T_1,T_2)$ of all 5-taxon induced trees from $T_1$ and $T_2$, so we
work with it instead of $\mathcal Q=\mathcal Q(T_1,T_2)$.

By counting cherries in $\mathcal F$, we may determine those possible
cherries that occur in both trees (count 8), exactly one tree (count 4
or 5), or no trees (count 0,1, or 2).
If a cherry occurs in both trees, suppose it is $\{1,2\}$. Then from
considering the 5-taxon trees on $\{2,3,4,5,6\}$ both $T_1$ and $T_2$
are determined.

For the reminder of the proof, we assume the trees have no cherry in common.
Thus either 4, 5, or 6 distinct cherries occur in $T_1$ and $T_2$. 
In the case of 6 distinct cherries, compatibility of cherries determines $T_1$ and $T_2$.

In the case of 5 distinct cherries, one of the trees must be
symmetric, and the other a caterpillar. Either compatibility of
cherries determines the symmetric tree (in which case both trees are
determined by removing the quartets from this tree from $\mathcal Q$
and using the remaining ones to construct the second tree), or we may
assume the 5 cherries have the form $\{1,2\}$, $\{3,4\}$, $\{5,6\}$
(from one tree) and $\{1,3\}$, $\{2,4\}$ (from the other), though of
course we do not know which come from which tree.  Since the cherry
$\{5,6\}$ is identified by this, consider the two elements of
$\mathcal F$ on $\{1,2,4,5,6\}$. As $\{5,6\}$ is a cherry in only one
of these trees, and that one also has $\{1,2\}$ as its other cherry,
this identifies $\{1,2\}$.  Thus $\{3,4\}$ is also known, and thus
one, and hence both, of the trees are determined.

In the case of 4 distinct cherries, both $T_1$ and $T_2$ are
caterpillars. We first investigate whether we can determine which
pairs of cherries occur on the same tree $T_i$. Since at least 2 of
the 4 cherries must be incompatible, let these be $\{1,2\}$ and
$\{1,3\}$. Either compatibility determines which other cherries these
are paired with, or the remaining cherries have the form $\{i,j\}$
with $i,j\in\{4,5,6\}$, and we may assume the cherries are $\{4,6\}$
and $\{5,6\}$.

If compatibility determined the cherries on $T_1$ as $\{1,2\}$ and
$\{i,j\}$, and those on $T_2$ as $\{1,3\}$ and $\{k,l\}$, then we may assume $j\ne 3$.
Then the two
elements of $\mathcal F$ on all taxa but $j$ can be matched with
the $T_i$ depending on whether they display the cherry $\{1,2\}$ or
$\{1,3\}$. This determines $T_1$, and hence $T_2$ as well.

This leaves only the case where the 4 cherries are $\{1,2\}$,
$\{1,3\}$, $\{4,6\}$ and $\{5,6\}$, which may be paired two ways. 
Considering the two elements of $\mathcal F$ on $\{1,2,3,4,5\}$, exactly one
must contain the cherry $\{1,2\}$. If the other cherry in this 5-taxon tree is
$\{3,5\}$, then this determines $T_1$ as $\{12|3456, 124|356, 1234|56\}$, and hence $T_2$ is determined as well.
Similarly, if the second cherry in the 5-taxon tree is $\{3,4\}$, then $T_1$ and $T_2$ are again uniquely determined.
If the second cherry is $\{4,5\}$, however, $T_1$ may be either $\{12|3456,123|456,1234|56\}$ or $\{12|3456,123|456,1235|46\}$.  
Considering the element of $\mathcal F$ on $\{1,2,3,4,5\}$ that contains cherry $\{1,3\}$, we likewise obtain two unique trees except
in the case where the second cherry is $\{4,5\}$, in which case $T_2$ could be either $\{13|2456, 123|456, 1234|56\}$ or
$\{13|2456,123|456,1235|46\}$. Finally, since only one of $T_1$ and $T_2$ can have cherry $\{5,6\}$, the only remaining
ambiguous case is that described in the statement of the Proposition.
\end{proof}

\begin{lemma}\label{lem:6inv}
  Consider the trees $T_1, T_2, T_3,$ and $T_4$ in
  equations \eqref{eq:2-tree-pairs} from Proposition
  \ref{prop:quartetmatched}.  Define the linear polynomial
$$f = q_{GGGGGG} + q_{GTTTTG} - q_{GTGGTG} - q_{GGTTGG}.$$
Then, for the JC and K2P models, $f$ satisfies
$$f \in \mathcal{I}(V_{T_1} \ast V_{T_2}) \setminus \mathcal{I}(V_{T_i}), \quad i \in \{3,4\}.$$
In particular, $V_{T_i} \not\subseteq V_{T_1} \ast V_{T_2}$, $i \in
\{3,4\}.$
\end{lemma}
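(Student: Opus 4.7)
The plan is to verify the lemma by direct substitution of the Fourier monomial parameterization \eqref{eq:fourierparam} on each of the four trees, using only the K2P relations $a^e_G = a^e_T$ at every split $e$; the JC case then follows automatically, being a further specialization. Each of the four words appearing in $f$ has every leaf-label in $\{G,T\}$, so under the K2P constraint the leaf factor $\prod_{i=1}^6 a^i_{g_i}$ becomes a single common factor across the four monomials. The only discriminating contribution comes from the three non-trivial splits of each tree, and the parameter chosen at such a split depends only on which of the three K2P classes $\{A\}$, $\{C\}$, $\{G,T\}$ contains the split-sum $\sum_{i\in A} g_i$.

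The first task is to tabulate, for each tree $T_j$ and each of the four words $GGGGGG$, $GTTTTG$, $GTGGTG$, $GGTTGG$, the triple of K2P classes appearing at the three internal splits. A direct calculation shows that on $T_1$, with internal splits $12|3456$, $123|456$, $1234|56$, the words $GGGGGG$ and $GGTTGG$ both produce the triple $(\{A\},\{G,T\},\{A\})$, while $GTTTTG$ and $GTGGTG$ both produce $(\{C\},\{G,T\},\{C\})$; thus the corresponding Fourier monomials are identically equal in pairs, and $f$ vanishes identically on $V_{T_1}$. The analogous tabulation on $T_2$, with splits $13|2456$, $123|456$, $1235|46$, shifts the matching pairs to $\{GGGGGG, GTGGTG\}$ and $\{GTTTTG, GGTTGG\}$, again giving $f \equiv 0$ on $V_{T_2}$. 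Since $f$ is linear and $\Span(V_{T_1}\ast V_{T_2}) = \Span(V_{T_1}, V_{T_2})$, these two vanishings together yield $f \in \mathcal{I}(V_{T_1}\ast V_{T_2})$.

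For the non-vanishing on $V_{T_3}$, with splits $13|2456$, $123|456$, $1234|56$, the same bookkeeping shows that the four triples now break the previous pairings in a controlled way: the middle split contributes a common factor, while on the two outer splits $13|2456$ and $1234|56$ the four words contribute internal parameters $(\alpha,\alpha)$, $(\beta,\beta)$, $(\alpha,\beta)$, $(\beta,\alpha)$ respectively, where $\alpha$ and $\beta$ denote the $A$- and $C$-class parameters on the indicated split. After factoring out the common leaf and middle-split contributions, $f$ reduces to a scalar multiple of $(a^{13|2456}_A - a^{13|2456}_C)(a^{1234|56}_A - a^{1234|56}_C)$, which is not identically zero on the parameter space; hence $f \notin \mathcal{I}(V_{T_3})$, and so $V_{T_3}\not\subseteq V_{T_1}\ast V_{T_2}$. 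The case $T_4$ is handled identically, with outer splits $12|3456$ and $1235|46$ playing the role of the two factors. The main (and entirely mechanical) obstacle throughout is keeping track of which of the three K2P classes each partial sum lands in across the four trees — precisely the bookkeeping that the colored-tree mnemonic of Section \ref{sec:groupBasedModels} is designed to streamline.
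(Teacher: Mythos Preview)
Your proposal is correct and follows essentially the same approach as the paper's proof: direct substitution of the Fourier parameterization on each tree, using only the K2P relation $a^e_G=a^e_T$. The only differences are presentational refinements --- you verify vanishing on $V_{T_1}$ and $V_{T_2}$ separately and invoke $\Span(V_{T_1}\ast V_{T_2})=\Span(V_{T_1},V_{T_2})$ rather than writing out the full mixture parameterization with $\pi$, and you carry the non-vanishing argument on $T_3$ (and $T_4$) through to the explicit factorization $(a^{e}_A-a^{e}_C)(a^{e'}_A-a^{e'}_C)$, whereas the paper simply asserts that $f$ is not identically zero there.
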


\begin{proof}	
  By symmetry of the relationship between trees $T_3$ and $T_4$ to
  that of
  $T_1$ and $T_2$, it suffices to prove the statement in the case that
  $i = 3$.  First, we will show that $f \in \mathcal{I}(V_{T_1} \ast
  V_{T_2})$.  Denote by $a$'s and $b$'s the parameters of the trees
  $T_1$ and $T_2$, respectively.
One checks that
\begin{multline*}q_{GGGGGG} = \pi  a^1_{G} a^2_G a^3_G a^4_G a^5_G a^6_G a^{12|3456}_A a^{123|456}_G a^{1234|56}_A \\+ (1-\pi) b^1_{G} b^2_G b^3_G b^4_G b^5_G b^6_G b^{13|2456}_A b^{123|456}_G b^{1235|46}_A, \end{multline*}
\begin{multline*}q_{GTTTTG} = \pi  a^1_{G} a^2_T a^3_T a^4_T a^5_T a^6_G a^{12|3456}_C a^{123|456}_G a^{1234|56}_C \\+ (1-\pi) b^1_{G} b^2_T b^3_T b^4_T b^5_T b^6_G b^{13|2456}_C b^{123|456}_G b^{1235|46}_C, \end{multline*}
\begin{multline*}q_{GTGGTG} = \pi  a^1_{G} a^2_T a^3_G a^4_G a^5_T a^6_G a^{12|3456}_C a^{123|456}_T a^{1234|56}_C \\+ (1-\pi) b^1_{G} b^2_T b^3_G b^4_G b^5_T b^6_G b^{13|2456}_A b^{123|456}_T b^{1235|46}_A, \end{multline*}
\begin{multline*}q_{GGTTGG} = \pi  a^1_{G} a^2_G a^3_T a^4_T a^5_G a^6_G a^{12|3456}_A a^{123|456}_T a^{1234|56}_A \\+ (1-\pi) b^1_{G} b^2_G b^3_T b^4_T b^5_G b^6_G b^{13|2456}_C b^{123|456}_T b^{1235|46}_C. \end{multline*}
Recall that for the JC and K2P models, we have $a^e_G = a^e_T$ and $b^e_G = b^e_T$ for all $e$. Therefore,  $f \in \mathcal{I}(V_{T_1} \ast V_{T_2})$.

On the other hand, if we denote the parameters for the tree $T_3$ by $c$'s, we have that:
$$q_{GGGGGG} = c^1_{G} c^2_G c^3_G c^4_G c^5_G c^6_G c^{13|2456}_A c^{123|456}_G c^{1234|56}_A,$$ 
$$q_{GTTTTG} = c^1_{G} c^2_T c^3_T c^4_T c^5_T c^6_G c^{13|2456}_C c^{123|456}_G c^{1234|56}_C,$$
$$q_{GTGGTG} = c^1_{G} c^2_T c^3_G c^4_G c^5_T c^6_G c^{13|2456}_A c^{123|456}_T c^{1234|56}_C,$$
$$q_{GGTTGG} = c^1_{G} c^2_G c^3_T c^4_T c^5_G c^6_G c^{13|2456}_C c^{123|456}_T c^{1234|56}_A.$$
Although for the JC model $c^e_C=c^e_G = c^e_T$ and $c^e_A=1$ for all $e$,  the linear polynomial $f$ evaluated at the expressions above does not give the zero polynomial.  Therefore  $f \notin \mathcal{I}(V_{T_3})$ for the JC, and hence for the K2P, model. 
\end{proof}

\smallskip

Finally, we pull together all of the results in this section in the proof of the main Theorem on tree identifiability:

\begin{proof}[Proof of Theorem \ref{thm:maintwotree}]
If the trees relate only 4 taxa,
Corollary \ref{cor:quartets} provides the claim.  
By Theorem
\ref{thm:sixToInfty}, it is now enough to consider cases with $n=5,6$.

If $\{T_1,T_2\}$ and $\{T_3,T_4\}$ are not quartet-matched, then there
is a quartet $Q$ of taxa such that $\{T_1|_Q,T_2|_Q\}
\neq\{T_3|_Q,T_4|_Q\}$.  Thus by Corollary \ref{cor:quartets} and
Lemma \ref{lem:subtree}, the claim follows.

If $\{T_1,T_2\}$ and $\{T_3,T_4\}$ are quartet-matched, then by
Proposition \ref{prop:quartetmatched}, up to symmetry, we need only
consider the case described by equations \eqref{eq:2-tree-pairs}. But
then Lemma \ref{lem:6inv} implies
$$\mathcal{I}(V_{T_3} \ast V_{T_4}) \setminus \mathcal{I}(V_{T_1}\ast V_{T_2})$$
contains a linear invariant, so 
$$V_{T_1} \ast V_{T_2} \not\subseteq V_{T_3}\ast V_{T_4}.$$
\end{proof}


\section{Comparing $2$-tree mixtures with unmixed models}
\label{sec:mix-and-unmix}

In this section, we report on preliminary investigations on distinguishing unmixed models from 2-tree
mixtures. More precisely, we study the
following question: For which triples of trees $T_1, T_2, T_3$ is $V_{T_3}
\not\subseteq V_{T_1} \ast V_{T_2}$?  We have already used instances of this
in establishing generic identifiability of trees in the $2$-tree mixture model,  
but our earlier work does not yield a general answer.

That we can distinguish a single-class, unmixed model $V_{T_3}$ from a 2-tree mixture model
$V_{T_1} \ast V_{T_2}$, as long as $T_3$ is not too closely related to
$T_1$ and $T_2$ is easily shown, however.  Indeed, Lemma \ref{lem:linearinv} and a variant
of Lemma \ref{lem:subtree} imply:

\begin{prop}
  If there is a four-element set $Q \subseteq [n]$ such that $T_3|_Q
  \notin \{ T_1|_Q, T_2|_Q \}$ then $V_{T_3} \not\subseteq V_{T_1}
  \ast V_{T_2}$.
\end{prop}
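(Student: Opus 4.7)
The plan is to reduce the claim to a purely $4$-taxon assertion via a single-tree analog of Lemma \ref{lem:subtree}, and then to exhibit a linear invariant in $\mathcal{I}(V_{T_1|_Q} \ast V_{T_2|_Q}) \setminus \mathcal{I}(V_{T_3|_Q})$. The needed variant says: if $V_{T_3|_Q} \not\subseteq V_{T_1|_Q} \ast V_{T_2|_Q}$, then $V_{T_3} \not\subseteq V_{T_1} \ast V_{T_2}$. Its proof is a carbon copy of Lemma \ref{lem:subtree}: marginalization to $Q$ is a linear map $A \colon \mathbb{C}^{k^n} \to \mathbb{C}^{k^{|Q|}}$ sending $V_T$ to $V_{T|_Q}$ for every $T$, and since joins commute with linear maps, $A(V_{T_1} \ast V_{T_2}) = V_{T_1|_Q} \ast V_{T_2|_Q}$. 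Any inclusion upstairs would therefore descend to one downstairs, contradicting the non-containment on $Q$.

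With the reduction in hand, the quartet-level argument splits into two cases depending on whether $T_1|_Q = T_2|_Q$ or not. If $T_1|_Q = T_2|_Q$, set $T := T_1|_Q$; by hypothesis $T \neq T_3|_Q$, so by the Steel and Hendy--Penny results quoted in Theorem \ref{thm:lininvdim} there is a linear invariant $\ell \in \mathcal{I}(V_T) \setminus \mathcal{I}(V_{T_3|_Q})$ for either the JC or K2P model. Since $V_T$ and $V_T \ast V_T$ share the same span, $\ell$ automatically vanishes on $V_T \ast V_T$ as well; this is exactly the self-join trick already used in the proof of Proposition \ref{prop:sametree}. If instead $T_1|_Q \neq T_2|_Q$, then because there are only three fully resolved quartet topologies and neither $T_1|_Q$ nor $T_2|_Q$ equals $T_3|_Q$, the pair $\{T_1|_Q, T_2|_Q\}$ consists of the two quartet topologies distinct from $T_3|_Q$. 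After relabeling the leaves of $Q$ so that $T_3|_Q$ plays the role of $T_{13|24}$ and $\{T_1|_Q, T_2|_Q\}$ plays the role of $\{T_{12|34}, T_{14|23}\}$, Lemma \ref{lem:linearinv} directly supplies the desired linear invariant $f$.

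There is essentially no genuine obstacle: the proposition follows once the marginalization variant is recorded, because both case checks on $Q$ invoke tools already in hand. The only mild point to verify is that Lemma \ref{lem:linearinv} is invariant under the $\mathfrak{S}_4$-action on the four taxa in $Q$, but this is immediate from the symmetry of the mixture construction and the parameterization, so the relabeling used in the second case is harmless.
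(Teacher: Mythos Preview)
Your proof is correct and follows the same route the paper indicates (a variant of Lemma~\ref{lem:subtree} to reduce to quartets, then Lemma~\ref{lem:linearinv}). Your case split is harmless but unnecessary: the linear form $f$ of Lemma~\ref{lem:linearinv} already vanishes on each of $V_{T_{12|34}}$ and $V_{T_{14|23}}$ separately, hence on any join $V_{T_1|_Q}\ast V_{T_2|_Q}$ with $\{T_1|_Q,T_2|_Q\}\subseteq\{T_{12|34},T_{14|23}\}$, so the same $f$ handles the self-join case without appealing to Theorem~\ref{thm:lininvdim}.
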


The smallest instance of a tree $T_3$ all of whose quartet trees 
arise from $T_1$ and $T_2$ occurs with $n = 5$ leaves for the triple:
\begin{align} \label{eq:3-5}
T_1 &= \{12|345, 123|45\},\notag\\  
T_2 &= \{13|245,  134|25 \},\\ 
 T_3 & = \{123|45, 13|245 \}\notag.
\end{align} In fact, this example is unique up to the action of $\mathfrak S_5$ on leaf labels.

We performed a computation using the computer algebra program Singular
\cite{GPS09}, which rigorously verified the following:

\begin{thm}\label{thm:5leafbad}
For the three 5-taxon  trees $T_1, T_2, T_3$ in \eqref{eq:3-5}, under the JC model, 
$$V_{T_3} \subseteq V_{T_1} \ast V_{T_2}.$$ 
\end{thm}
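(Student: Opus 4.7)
The plan is to reduce the variety containment $V_{T_3} \subseteq V_{T_1} \ast V_{T_2}$ to the equivalent reverse ideal containment $\mathcal{I}(V_{T_1} \ast V_{T_2}) \subseteq \mathcal{I}(V_{T_3})$, and verify this rigorously in Singular. I would work in Fourier coordinates, where the $5$-taxon JC model embeds in $\mathbb{P}^{F_8 - 1} = \mathbb{P}^{33}$ and each $V_{T_i}$ is a $7$-dimensional toric variety with an explicit monomial parameterization (\ref{eq:fourierparam}) using one parameter $a^e_C$ per edge (setting $a^e_A = 1$).

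First I would set up the three parameterizations $\psi_{T_1}, \psi_{T_2}, \psi_{T_3}$ in Singular, then build the join parameterization
$$
\psi_{T_1,T_2}(s_1, s_2, \pi) = \pi \, \psi_{T_1}(s_1) + (1 - \pi)\, \psi_{T_2}(s_2),
$$
viewed as a rational map from a $15$-dimensional parameter space into the $34$ Fourier coordinates. Eliminating the $15$ parameters yields a Groebner basis for $\mathcal{I}(V_{T_1} \ast V_{T_2})$, while the toric ideal $\mathcal{I}(V_{T_3})$ is obtained by an analogous (but much cheaper) elimination. The containment then reduces to a routine check that every generator of the first ideal has normal form zero modulo a Groebner basis of the second.

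The main obstacle is the elimination producing $\mathcal{I}(V_{T_1} \ast V_{T_2})$: the join is not toric, and straightforward elimination of $15$ parameters from a system with $34$ ambient variables can be prohibitively expensive. Two natural mitigations are to exploit the $\mathbb{Z}_2 \times \mathbb{Z}_2$ group-symmetry (under which all three ideals are equivariant), which collapses the problem along isotypic components, and to choose a weighted monomial order compatible with the underlying toric structure of each $V_{T_i}$. In practice the elimination can also be stabilized by performing it one parameter at a time, saturating with respect to $\pi(1-\pi)$ to remove spurious components from the boundary of the join.

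As a constructive backup that would both shorten the Singular run and provide insight, I would search for an explicit rational section: rational functions $s_1(t), s_2(t), \pi(t)$ of the seven Fourier parameters $t$ of $T_3$ satisfying
$$
\psi_{T_3}(t) \;=\; \pi(t)\, \psi_{T_1}(s_1(t)) + (1 - \pi(t))\, \psi_{T_2}(s_2(t))
$$
as an identity in $t$. Such a section is geometrically plausible because $\dim V_{T_3} = 7$ is far below the expected dimension $15$ of the join, so there should be considerable room for a lift. Producing such rational functions, guided by Groebner computations in the much smaller parameter rings of the three trees, would certify the containment directly and reduce the Singular role to a mechanical verification of a polynomial identity.
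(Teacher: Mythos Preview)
Your high-level reduction to the ideal containment $\mathcal{I}(V_{T_1}\ast V_{T_2})\subseteq\mathcal{I}(V_{T_3})$ is correct, but the proposed main step---eliminating the $15$ mixture parameters to obtain a Gr\"obner basis for $\mathcal{I}(V_{T_1}\ast V_{T_2})$---is precisely what the paper reports to be infeasible: ``though in principle it is possible to compute $\mathcal{I}(V_{T_1}\ast V_{T_2})$ directly, it is beyond current capabilities.'' Your suggested mitigations (equivariance, weighted orders, stepwise elimination, saturation) are reasonable generic tricks but are not the ones that make this particular computation terminate.

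The paper's proof avoids ever computing the join ideal via two ideas you are missing. First, all $V_T$ for the JC model carry a $(\mathbb{C}^*)^n$-action coming from rescaling pendant-edge parameters, and the join is invariant under it; hence it suffices to test containment for the $2$-dimensional slice $V\subset V_{T_3}$ obtained by setting all five pendant parameters on $T_3$ equal to $1$. Second, rather than computing the join ideal and then reducing modulo $\mathcal{I}(V)$, the paper uses the elimination description $I\ast J=(I(q')+J(q-q'))\cap\mathbb{C}[q]$ and \emph{adjoins $\mathcal{I}(V)$ before eliminating}: one checks
\[
\bigl(\mathcal{I}(V_{T_1})(q')+\mathcal{I}(V_{T_2})(q-q')+\mathcal{I}(V)(q)\bigr)\cap\mathbb{C}[q]\;=\;\mathcal{I}(V).
\]
Because $\mathcal{I}(V)$ is present from the start, the elimination is drastically constrained and terminates. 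Your plan computes the (huge) join ideal first and only afterwards reduces modulo $\mathcal{I}(V_{T_3})$; swapping the order is the whole point.

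Your backup idea of exhibiting an explicit rational section $\psi_{T_3}(t)=\pi(t)\psi_{T_1}(s_1(t))+(1-\pi(t))\psi_{T_2}(s_2(t))$ is appealing, but the paper's follow-up computations show that for generic points of $V_{T_3}$ the fiber of mixture parameters consists of components on which some internal edge parameter on $T_1$ or $T_2$ is forced to $0$ or $\infty$. So any such section would necessarily be degenerate in this sense; it may still be writable, but it is not the clean stochastic lift you seem to anticipate, and finding it is not obviously easier than the computation above.
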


\begin{proof}
We  explain the approach behind  our computation.  
  
All of the JC varieties $V_T$ for an $n$-leaf tree are invariant under
an action of the torus $(\mathbb{C}^*)^n$. This action arises from
rescaling the pendant edge parameters.  That is, if $q \in V_T$ and
$\lambda \in (\mathbb{C}^*)^n$, then for any subforest $F$ of $T$ the
Fourier coordinate $q_F$ is transformed as $(\lambda \cdot q)_F = q_F
\prod_{e \in L(F)} \lambda_e$, where $L(F)$ is the set of pendant
edges appearing in $F$.  Since $\lambda \cdot q \in V_T$, it suffices
to prove the claimed containment in the theorem in the case where all
pendant edge parameters on the tree $T_3$ are set to $1$.

Let $V$ and $W$ be two varieties.  Note that $V \subseteq W$, is
equivalent to $\mathcal{I}(W) \subseteq \mathcal{I}(V)$.  This
containment of ideals holds if, and only if, $\mathcal{I}(W) +
\mathcal{I}(V) = \mathcal{I}(V)$, which we use to speed up
computations.  Hence, it suffices to show that
\begin{equation} \label{eq:containsec}
\mathcal{I}(V_{T_1} \ast V_{T_2}) + \mathcal{I}(V) = \mathcal{I}(V)
\end{equation}
where $V$ is the subvariety of $V_{T_3}$ where all the pendant edge parameters are set to $1$.

Finally, though in principle it is possible to compute
$\mathcal{I}(V_{T_1} \ast V_{T_2})$ directly, it is beyond current
capabilities.  However, an alternative approach to join ideals uses
elimination: if $I, J \subseteq \mathbb{C}[q]$ are two ideals, their
join ideal is
$$I\ast J = (I(q') + J(q - q')) \cap \mathbb{C}[q] $$
where $I(q')$ is the ideal $I$ with variables $q_i'$ 
substituted for variables $q_i$, and $J(q- q')$ is ideal $J$ with $q_i
- q'_i$ substituted for $q_i$.  Hence, we can test
(\ref{eq:containsec}) by testing if
$$\mathcal{I}(V) = (\mathcal{I}(V_{T_1})(q') + \mathcal{I}(V_{T_2})(q - q') + \mathcal{I}(V)(q) ) \cap \mathbb{C}[q]. $$
This statement is verified by the code we provide in the supplementary materials \cite{TTwebsite}.
\end{proof}

\smallskip

Theorem \ref{thm:5leafbad} raises as many questions as it answers.
First, note that it is a statement about complex varieties, and leaves
open the possibility that ${\mathcal M}_{T_3} \not\subseteq {\mathcal
  M}_{T_1} \ast {\mathcal M}_{T_2}.$ We investigated this
computationally as follows, using code available in
\cite{TTwebsite}. Choosing random JC parameters on $T_3$, we
repeatedly produced a point in $V_{T_3}\subseteq V_{T_1}*V_{T_2}$,
thus obtaining a sample with high probability of exhibiting generic
behavior. For each such point, we then produced a system of algebraic
equations whose solutions would give mixture parameters on $T_1$ and
$T_2$ to produce this point.  The solution set forms an algebraic
variety, which in our trials was always of dimension 2. A primary
decomposition of the ideal showed there were three components of the
solution set, two of dimension 2 and one of dimension 1.

One of the 2-dimensional components was defined in part by setting one
internal edge length on $T_1$ to infinity and one internal edge length
on $T_2$ to 0. The mixing parameter, all split parameters in $T_2$,
and all but 4 split parameters in $T_1$ were uniquely determined. Two
quadratic relationships in 2 variables each held for the remaining
parameters.  The other 2-dimensional component is similar, with the
roles of $T_1$ and $T_2$ reversed. The 1-dimensional component
requires that an internal edge on each tree have length 0, but allows
the mixing parameter to vary along with two edges on each tree. (See
\cite{TTwebsite} for the precise results.)

It is worth highlighting that the only 2-tree mixtures matching the
1-tree distribution were of this extreme nature, with some internal
edges of length 0 or infinity. If one allows these values, then there
are instances of all mixture parameters being in a stochastically
meaningful range.  Of course formally establishing any conjectures
these calculations suggest would require a detailed semi-algebraic
analysis of these models.

A second question Theorem \ref{thm:5leafbad} might lead one to ask is
if $T_3$ is a tree all of whose quartets comes from either $T_1$ or
$T_2$, then is $V_{T_3} \subseteq V_{T_1} \ast V_{T_2}$? However, we
have already seen an instance where this failed in Lemma
\ref{lem:6inv}.  It would be interesting to characterize precisely
when these types of containments arise.

Finally, it is not at all clear if the containment in Theorem
\ref{thm:5leafbad} is a special phenomenon for the JC model, or if it
can occur more generally for other group-based or more general
phylogenetic models.  Answering such questions will require an
understanding of this phenomenon beyond the computational perspective.


\section{Identifiability of Continuous Parameters}
\label{sec:id-params}

Assuming tree topologies are already known, we next explore the
generic identifiability of the continuous parameters in group-based
mixture models. We use both rigorous arguments and computational
approaches to address this issue.
While standard laptop computers were sufficient for most of this work (see \cite{TTwebsite}),
the more intensive computations were performed on a more powerful machine
provided by Erich Kaltofen of NCSU.

\smallskip

Proving a model has identifiable continuous parameters requires
showing that the parameterization map is one-to-one. Without any
special assumptions on the map, it may be one-to-one on some region of
parameter space, but not on another.  Well-known to algebraic
geometers, however, is that parameterization maps defined by
polynomial formulas, such as those for the models we study, have the
nice feature that they exhibit a \emph{generic} behavior.
More
specifically, there is some $k\in \{1,2,3,\dots,\infty\}$ such that
for all parameter values except those in some exceptional set $E$, the
map will be $k$-to-one (\emph{cf.}, Prop.~7.16 of \cite{Harris}). Crucially, the set $E$ where the generic
behavior may fail is closed and of Lebesgue measure 0 within the full
parameter space (since it is a proper algebraic subvariety). 
In the case of complex univariate polynomials, this fact is more widely familiar:
given an $n$-th degree
polynomial  $p(z)$, for almost all $\alpha\in \mathbb C$ the equation $p(z)=\alpha$ has $n$ distinct roots. However, for a finite number of exceptional values of $\alpha$ there may be fewer distinct roots. Thus $p$ defines a generically $n$-to-one map from $\mathbb C$ to $\mathbb C$. 

One can computationally determine the generic behavior \emph{with high
  probability} as follows: For a specific choice of parameters,
calculate the cardinality of the set of all other choices of
parameters with the same image. If, for many such random choices, one
finds this fiber is of size $k$, there can be little doubt that the
map is generically $k$-to-1. These computations can be performed
exactly by computational algebra software such as Macaulay2 \cite{M2}
or Singular \cite{GPS09}, and carefully performed repeated trials can
give one high confidence. Of course, such an approach
does not rigorously establish results. However, the use of random data to reliably study behavior of specific polynomial equations
is not novel. For instance, Section 6 of \cite{HosKheSturm} gives a different application of the idea in phylogenetics.

This approach unfortunately does not give any quantifiable meaning to the term `high probability,' as we lack any explicit information on the set $E$ where non-generic behavior may arise. If a non-zero multivariate polynomial vanishing on $E$ were known, we would only need to compute that the map was $k$-to-one for a single point not satisfying that polynomial, and obtain a rigorous result. If we knew only the degree of such a polynomial, by the Schwartz-Zippel Theorem (\emph{cf.}, for instance, \cite{Rudich}), we could
produce points with arbitrarily small probability of lying in $E$, and use these to quantify our terminology. However, we have no such information, and thus our confidence in having determined the generic behavior is based partly on experience. In choosing points for calculations, a useful heuristic is to pick coordinates to be random rational numbers  (perhaps also requiring that they be expressible using disjoint sets of primes), in hopes that the unknown polynomial equations describing $E$ are less likely to
be satisfied. Indeed, if 25 points chosen in this way all produce the same value of $k$, while it is possible they all lie in $E$, the evidence is strong
that they do not.

We label statements with ``Theorem*'' or ``Proposition*'' if
we are only highly confident of them through such computation.
Unstarred statements are rigorously proved. Thus while we are careful
to distinguish between results with rigorous proof and those depending
on such calculations, we are highly confident of both.

\smallskip

One of the results we found computationally was a particularly
surprising non-identifiability result for continuous parameters of
4-taxon tree mixtures under the JC model. Nonetheless, passing to
5-taxon trees restores identifiability.

The first main result in this section is:

\begin{thmstar}\label{thm:paramident} 
  For the JC model, the continuous parameters in the $2$-tree
  mixture with parameterization $\psi_{T_1,T_2}$ are generically identifiable for binary trees with $n \geq
  4$ leaves, except in the case that $n = 4$ and $T_1 = T_2$.
\end{thmstar}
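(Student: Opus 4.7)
The plan is to combine the already-established generic identifiability of the tree parameters with computational verification of the generic fiber size of $\psi_{T_1,T_2}$ on a finite collection of base cases, and then to lift to larger trees by marginalization. Since the corollary to Theorem \ref{thm:maintwotree} gives generic identifiability of $\{T_1,T_2\}$, it suffices to fix a binary tree pair $(T_1, T_2)$ on $n\geq 4$ leaves (excluding $n=4, T_1=T_2$) and show that
$$\psi_{T_1,T_2}: S_{T_1}\times S_{T_2}\times [0,1] \to \Delta^{4^n-1}$$
has generically singleton fibers, up to the class-swap symmetry $(s_1,s_2,\pi)\leftrightarrow(s_2,s_1,1-\pi)$ when $T_1=T_2$. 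After passing to Fourier coordinates the map becomes polynomial, so (\emph{cf.}~Prop.~7.16 of \cite{Harris}) it has a well-defined generic fiber cardinality $k_0 \in \{1,2,\ldots,\infty\}$, and proving the theorem amounts to showing $k_0=1$ (resp.~$k_0=2$) by examining a single generic fiber.

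For each tree pair in a list of base cases, I would run the following Singular computation. Choose $s_0\in S_{T_1,T_2}$ with random rational coordinates, ideally with numerators and denominators built from disjoint small primes so that $s_0$ is unlikely to lie on any exceptional subvariety. Compute $p_0=\psi_{T_1,T_2}(s_0)$, form the fiber ideal obtained by setting the parameterization equal to $p_0$, and compute a primary decomposition or solve directly. A singleton solution (or class-swap pair when $T_1=T_2$) witnesses the desired value of $k_0$; repeating with several independent choices of $s_0$ accumulates the high-probability confidence recorded by the ``Theorem*'' label. The base cases I would need are $n=4$ with $T_1\neq T_2$ (one case up to $\mathfrak{S}_4$-symmetry) and $n=5$ for every pair $\{T_1, T_2\}$, including $T_1=T_2$, of which only finitely many are inequivalent under $\mathfrak{S}_5$.

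For $n\geq 6$ I would argue inductively using marginalization. Suppose $\psi_{T_1,T_2}(s_1,s_2,\pi)=\psi_{T_1,T_2}(s_1',s_2',\pi')$. Restricting to any five-leaf subset $K\subseteq [n]$ yields an equality of images under $\psi_{T_1|_K,T_2|_K}$, and the $n=5$ base case forces the marginalized continuous parameters to coincide (and $\pi=\pi'$). Since each Fourier parameter $a^e_g$ of $T_i$ appears as a factor in the marginalized edge parameter of several induced subtrees $T_i|_K$, letting $K$ range over an appropriate covering family of five-element subsets recovers every $a^e_g$ up to scaling, and multiplicative consistency across overlapping subtrees fixes the scalings.

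The main obstacle is the marginalization step, which must avoid the excluded non-identifiable configuration $(n=4, T_1=T_2)$: a five-leaf induced subtree has its own induced four-leaf subtrees, any of which might realize the degenerate quartet configuration. This is why $n=5$, rather than $n=4$, is the natural atomic base case, and it forces care in the choice of covering family to guarantee that each edge of $T_1$ and $T_2$ is resolved by five-leaf marginalizations avoiding degeneracy. A secondary issue, the coherence of the class-swap symmetry across different marginalizations when $T_1=T_2$, is resolved by making a single consistent choice on one induced subtree and propagating it.
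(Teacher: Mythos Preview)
Your strategy matches the paper's broad outline (computational base cases plus marginalization to 5-taxon subsets for larger $n$), but the base-case work differs substantially. You propose direct fiber computations for every $\mathfrak{S}_5$-class of 5-taxon pair $\{T_1,T_2\}$. The paper instead keeps all Gr\"obner computations at $n=4$ (Propositions* \ref{prop:4leafparamid1} and \ref{prop:4leafparamid2}) and assembles the 5-taxon result from these by marginalizing further to quartets inside the $n=5$ argument (Proposition* \ref{lem:5taxa}); crucially, the same-tree case $T_1=T_2$ at $n=5$ is handled \emph{rigorously} via Kruskal's tensor-uniqueness theorem (Proposition \ref{prop:fiveleafsametree}), which yields an unstarred theorem for JC, K2P, and K3P whenever $T_1=T_2$ and $n\ge 5$. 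Your purely computational route would forfeit that rigorous byproduct and push the Gr\"obner work to a larger scale. In the inductive step, your plan to recover each $a^e_g$ ``up to scaling'' and then fix scalings by ``multiplicative consistency'' is more elaborate than needed: every edge parameter of $T_i$ already appears unchanged as an edge parameter of some $T_i|_K$ with $|K|=5$, so it may be read off directly. Finally, the paper resolves the cross-marginalization class-swap ambiguity by imposing the generic condition $\pi\ne 1/2$ and using the numerical value of $\pi$ to align all 5-taxon pieces; your ``propagate a choice through overlaps'' can be made to work but requires the unstated observation that overlapping induced parameters are generically distinct.
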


An issue that will arise in our proof of Theorem* \ref{thm:paramident} and related results, concerns the maps $\psi_T$ parametrizing $V_T$ for the JC, K2P, and K3P models. The proof in \cite{Chang} of the identifiability of numerical parameters for the general Markov model shows that identifiability of numerical parameters only holds up to permutation of states at internal nodes of the tree. Permuting the states at an internal node corresponds to permuting rows of transition matrices on edges leading out of the node, and columns of matrices on edges leading into the node. As any permutation of the rows or columns of a JC matrix that is also a JC matrix is identical to the original matrix, this implies the JC parameterization is generically one-to-one. For a generic K2P matrix, there are two orderings of the rows that have K2P form, and hence the parametrization map is generically $2^{n-2}$-to-one. For a generic K3P matrix, there are four orderings of the rows that have K3P form, and hence the parametrization map is generically $4^{n-2}$-to-one. To avoid complications in statements due to these understood failures of identifiability in its strictest sense, it is more convenient to focus on the $k$-to-oneness of the maps $\phi_{T_1,T_2}$, using
equation \eqref{eq:compparam} to relate results to $\psi_{T_1,T_2}$.

\smallskip

The first step toward Theorem* \ref{thm:paramident} is performing computations to establish the following.

\begin{propstar}\label{prop:4leafparamid1} 
Let $T_1\ne T_2$ be binary trees with four leaves.  Then for the JC model, the map 
$$\phi_{T_1, T_2} : V_{T_1} \times V_{T_2} \times \mathbb{P}^1 \dashrightarrow V_{T_1} \ast V_{T_2}$$
is generically one-to-one.
\end{propstar}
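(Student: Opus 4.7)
My plan is to prove this computationally, following the philosophy articulated at the start of Section \ref{sec:id-params} and the code-based strategy used for Theorem \ref{thm:5leafbad}. The first step is a reduction by symmetry: the action of $\mathfrak{S}_4$ on leaf labels permutes the three distinct binary quartet trees transitively (and hence acts transitively on unordered pairs of distinct such trees), so it suffices to verify the claim for a single canonical pair, say $\{T_{12|34}, T_{13|24}\}$.

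Next, I would establish generic finiteness of $\phi_{T_1,T_2}$ via a dimension count. For the JC model on 4 taxa, each $V_{T_i}$ has dimension $2n-3 = 5$, so the domain $V_{T_1} \times V_{T_2} \times \mathbb{P}^1$ is 11-dimensional. An argument modeled on the proof of Lemma \ref{prop:sametreedim}, applying Theorem \ref{thm:draisma} to a suitable hyperplane separating the exponent configurations attached to the two trees, should show that the join $V_{T_1} \ast V_{T_2}$ has the expected dimension 11 as well. Since $\phi_{T_1,T_2}$ is then a dominant rational map between irreducible varieties of the same dimension, it is generically finite, and by the standard fact recalled in Section \ref{sec:id-params} (cf.~Prop.~7.16 of \cite{Harris}) the fiber cardinality is constant on a Zariski-open subset of $V_{T_1} \ast V_{T_2}$.

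With generic finiteness in hand, the problem reduces to computing the fiber cardinality at one generic point. Concretely, I would (i) choose pseudorandom rational Fourier parameters $s_1$ on $T_1$, $s_2$ on $T_2$, and a mixing parameter $\pi$; (ii) compute the image $p_0 = \pi\psi_{T_1}(s_1) + (1-\pi)\psi_{T_2}(s_2)$ using the monomial formulas \eqref{eq:fourierparam}; (iii) in a polynomial ring with fresh variables $(s_1', s_2', \pi')$, form the ideal generated by the coordinate equations of $\pi' \psi_{T_1}(s_1') + (1-\pi')\psi_{T_2}(s_2') - p_0$; and (iv) use Singular to compute the dimension and degree (or a full primary decomposition) of this ideal, verifying that it cuts out only the single known preimage. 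Via equation \eqref{eq:compparam} and the fact that JC transition matrices are unchanged by row/column permutations (so the $\psi_{T_i}$ are themselves generically one-to-one, not merely one-to-one up to the $\mathfrak{S}_n$-ambiguity at internal nodes), this implies generic one-to-oneness of $\phi_{T_1,T_2}$.

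The main obstacle is the one already flagged by the starred designation: we have no explicit polynomial cutting out the exceptional locus $E$ where the generic fiber size could differ, so a single trial constitutes strong evidence rather than proof. I would hedge against this by running the calculation for many independent pseudorandom parameter vectors, preferably with denominators drawn from pairwise disjoint sets of primes, so that accidental simultaneous membership in $E$ is vanishingly implausible. A secondary practical issue is the cost of the Gröbner basis calculation: the target ideal lives in an 11-variable ring and the number of equations is the number of distinct Fourier coordinates $F_6 = 13$, which is within the reach of Singular on modest hardware, though one may wish to exploit the torus action rescaling pendant edge parameters (as in the proof of Theorem \ref{thm:5leafbad}) to normalize some parameters to $1$ and shrink the computation.
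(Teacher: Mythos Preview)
Your proposal is essentially the paper's own approach: pick random rational parameters, compute the image point $p$, solve $\psi_{T_1,T_2}(s_1',s_2',\pi')=p$ for the full preimage via a Gr\"obner basis computation in Singular, observe it is a single point, repeat for many random choices to gain high confidence in generic one-to-oneness, and then pass from $\psi_{T_1,T_2}$ to $\phi_{T_1,T_2}$ using that the JC parameterizations $\psi_{T_i}$ are themselves generically one-to-one. The paper omits both the symmetry reduction and the separate dimension count (and note that Theorem~\ref{thm:draisma} as stated is for secants $V_A\ast V_A$, not joins of distinct toric varieties), since a computed single-point fiber already witnesses finiteness and the value $k=1$ simultaneously.
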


\begin{proof}[Calculation] 
  From randomly chosen rational parameters in the domain of
  $\psi_{T_1,T_2}$, we computed a point $p\in
  V_{T_1}*V_{T_2}$. Solving the system of polynomial equations
  $\psi_{T_1,T_2}(s_1,s_2,\pi)=p$ determines the (complex) preimage of
  $p$. This preimage can be calculated using Gr\"obner bases, and was
  found to consist of a single point for the many such random choices
  we made. We can be therefore be highly confident that
  $\psi_{T_1,T_2}$ is one-to-one, by the existence of a generic
  behaviour of any polynomial map. That $\phi_{T_1,T_2}$ is one-to-one
  then follows from the fact that $\psi_{T_1}$ and $\psi_{T_2}$ are
  generically one-to-one parameterizations of $V_{T_1}$ and $V_{T_2}$.

Code is provided in the supplementary materials \cite{TTwebsite}.
\end{proof}

\smallskip

Although we attempted to perform similar calculations to extend
Proposition* \ref{prop:4leafparamid1} to the K2P and K3P models,
these failed to terminate in 3 weeks time.

\smallskip

In the case of a mixture on two trees with the same topology, the
possibility of interchanging the mixture components shows the map
cannot be one-to-one. Generic identifiability thus corresponds to
generic two-to-oneness in this case. For this type of mixture, we are able to
perform computations for both the JC and Kimura models.

\begin{propstar}\label{prop:4leafparamid2}
Let $T$ be a binary tree with four leaves.  Then for the K2P and K3P models, the map
$$\phi_{T,T}: V_T \times V_T \times \mathbb{P}^1 \dashrightarrow V_{T} \ast V_T$$
is generically two-to-one.  For the JC model, the map $\phi_{T,T}$ is
generically twelve-to-one.  \end{propstar}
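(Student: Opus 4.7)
The plan is to carry out the same kind of computer algebra fiber computation used in the proof of Proposition* \ref{prop:4leafparamid1}, adapted to the case of a mixture on two copies of the same 4-taxon tree. Fixing, say, the tree $T = T_{12|34}$, I would for each of the three models (JC, K2P, K3P) generate a random rational point in the stochastic parameter space, compute its image $p = \psi_{T,T}(s_1,s_2,\pi)$ under the mixture parameterization, and then solve the polynomial system $\psi_{T,T}(s_1',s_2',\pi') = p$ using Gr\"obner basis machinery in Singular. The cardinality of the resulting zero-dimensional ideal gives the generic fiber size of $\psi_{T,T}$, and one recovers the fiber size of $\phi_{T,T}$ by dividing by $(\text{fiber size of }\psi_T)^2$, using the generic degrees $1$, $2^{n-2}$, $4^{n-2}$ of $\psi_T$ for the JC, K2P, and K3P models respectively. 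Repeating the experiment with many random parameter choices and obtaining the same value each time establishes the generic behavior with the usual ``high probability'' caveat discussed at the start of the section.

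For the K2P and K3P models on $n=4$ taxa, the predicted $\phi_{T,T}$ fiber size of $2$ corresponds to the class-swap symmetry $(s_1,s_2,\pi)\leftrightarrow(s_2,s_1,1-\pi)$, and translates to expected $\psi_{T,T}$ fiber sizes of $2\cdot 4^2 = 32$ and $2\cdot 16^2 = 512$ respectively; seeing exactly these numbers in the computation will confirm that no additional non-identifiability beyond class swapping is present. For the JC model, where $\psi_T$ is generically one-to-one, a fiber size of $12$ for both $\phi_{T,T}$ and $\psi_{T,T}$ is expected, exceeding the $2$ that class swapping alone would produce. The extra factor of $6$ is what signals that continuous parameters fail to be generically identifiable for the JC mixture on $4$ taxa, and produces the exceptional non-identifiable case flagged in the introduction.

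The main obstacle will be purely computational. Solving the polynomial systems that arise from mixtures on two copies of a tree is considerably more delicate than in the two-distinct-trees case of Proposition* \ref{prop:4leafparamid1}, since the symmetry between the two components tends to inflate intermediate Gr\"obner basis computations, and the K3P model has by far the largest parameter count of the three (this is presumably why Proposition* \ref{prop:4leafparamid1} itself could not be carried out for K2P or K3P). Following the heuristic discussed earlier, I would draw parameter values as random rational numbers with small, disjoint prime denominators so as to minimize the risk that the chosen test point lies in the exceptional locus $E$ where the generic fiber size drops, and I would rely on consistency across many independent trials, together with the existence of a generic behavior for polynomial maps, to justify labeling the result a Proposition*.
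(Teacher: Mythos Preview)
Your proposal is correct and follows essentially the same approach as the paper: the paper's ``proof'' of this Proposition* is simply a one-sentence reference back to the Gr\"obner-basis fiber computation of Proposition*~\ref{prop:4leafparamid1}, with code relegated to the supplementary materials, and your description fleshes out exactly that computation. Your added bookkeeping relating the fiber sizes of $\psi_{T,T}$ and $\phi_{T,T}$ via the generic degrees of $\psi_T$ is correct, though note that in practice the computation may be carried out in Fourier coordinates (where the toric parametrization is birational for all three models), in which case one sees the $\phi_{T,T}$ fiber size directly rather than the inflated counts $32$ and $512$.
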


\begin{proof}[Calculation]
  The calculations which indicate this holds with high probability is
  similar to that for Proposition* \ref{prop:4leafparamid1}.

Code is provided in the supplementary materials \cite{TTwebsite}.
\end{proof}

\smallskip

Note that the twelve-to-oneness in the case of the JC model is not
merely a mathematical anomaly relevant to complex parameter choices
only.  This type of non-identifiability for secant parameters can and
does occur for stochastically meaningful parameters.

\begin{ex} Searches of parameter space give instances of $2$, $4$ or
  $8$ stochastic parameter choices producing the same image in the
  4-taxon 2-class JC mixture on a single tree topology. For instance,
  if $a^e=a^e_C=a^e_G=a^e_T$ denotes the JC parameters for one class
  on $T$, and $b^e=b^e_C=b^e_G=b^e_T$ those for the other class, and
  $\pi$ the proportion of the first class, then
\begin{gather*}\pi = 0.1,\\
 a^1 = 0.05,\ 
 a^2 = 0.10,\ 
 a^3 = 0.12,\ 
 a^4 = 0.04,\ 
 a^{12|34} = 0.01,\\
 b^1 = 0.04,\ 
 b^2 = 0.14,\ 
 b^3 = 0.10,\ 
 b^4 = 0.11,\ 
b^{12|34} = 0.46,
\end{gather*} 
and 7 other choices of parameters have the same image. Up to
interchanging classes, there are 4 essentially different choices.
Code verifying this example, and examples showing $2$ or $4$
biologically relevant preimages, are included in the supplementary
materials \cite{TTwebsite}.  We do not know if exactly $6$, $10$, or
$12$ biologically relevant preimages can occur.
\end{ex} 

We rigorously establish the following:

\begin{prop}\label{prop:fiveleafsametree}
  Let $T$ be a binary tree with five leaves.  Then for the JC, K2P,
  and K3P models the map
$$\phi_{T,T}: V_T \times V_T \times \mathbb{P}^1 \dashrightarrow V_{T} \ast V_T$$
is generically two-to-one.
\end{prop}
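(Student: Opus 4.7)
The plan is to combine marginalization to 4-element leaf subsets with the 4-taxon fiber information from Proposition* \ref{prop:4leafparamid2} and a reconstruction lemma that recovers a point of $V_T$ from its collection of 4-marginals. Let $(v_1, v_2, \pi)$ and $(v_1', v_2', \pi')$ be two preimages of the same generic point $p \in V_T \ast V_T$ under $\phi_{T,T}$. Since marginalization is linear and commutes with the mixture construction, for each 4-subset $K \subset [5]$ both triples restrict to preimages of $p|_K \in V_{T|_K} \ast V_{T|_K}$ under $\phi_{T|_K, T|_K}$.

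For the K2P and K3P models, Proposition* \ref{prop:4leafparamid2} says $\phi_{T|_K, T|_K}$ is generically two-to-one, so at each $K$ we must have either $(v_1'|_K, v_2'|_K, \pi') = (v_1|_K, v_2|_K, \pi)$ (``aligned'') or $(v_2|_K, v_1|_K, 1-\pi)$ (``swapped''). Because $\pi \neq 1-\pi$ for generic $\pi$, the same choice must be made at every $K$. In the aligned case I would invoke a reconstruction lemma: a point of $V_T$ on a 5-taxon tree is uniquely determined by its marginalizations to all 4-element leaf subsets. This follows from the toric Fourier parametrization, since every pendant-edge parameter appears directly in at least one 4-marginal subtree, and every internal-edge parameter can be extracted by dividing a ``merged-edge'' parameter appearing in one 4-marginal (when a neighbouring leaf is pruned) by an unmerged pendant parameter recovered from another 4-marginal. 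For example, on $T=\{12|345,\,123|45\}$, the $K=\{2,3,4,5\}$ marginal yields $\alpha_2\beta_1$, $\alpha_3$, $\alpha_4$, $\alpha_5$, $\beta_2$ and the $K=\{1,2,4,5\}$ marginal yields $\alpha_1$, $\alpha_2$, $\alpha_4$, $\alpha_5$, $\beta_1\beta_2$, and these jointly determine each $\alpha_i$ and each $\beta_j$; the same style of recovery works for K2P and K3P. The lemma applied to $v_1, v_1'$ and to $v_2, v_2'$ then gives $v_1=v_1'$, $v_2=v_2'$, and the swapped case gives the only other fibre element.

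For the JC model, Proposition* \ref{prop:4leafparamid2} gives a 12-to-1 fiber at the 4-taxon level, so the consistency step does not immediately cut down to two options. I would address this via the inclusion $V_T^{JC}\ast V_T^{JC}\subseteq V_T^{K2P}\ast V_T^{K2P}$: every JC preimage of $p$ is automatically a K2P preimage, so if a generic JC mixture point also lies in the generic stratum of the K2P secant parametrization, then the K2P 2-to-1 bound forces the JC fiber size to be at most 2, matched below by the class-swap bound of 2. Since the JC mixture variety is irreducible and the K2P non-generic locus is a proper closed subvariety, this genericity claim reduces to exhibiting a single JC mixture point at which $\phi_{T,T}^{K2P}$ is two-to-one, a finite Gr\"obner-basis check.

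The main obstacle is this last verification for JC: establishing that a generic JC mixture is generic for the K2P parametrization. An alternative, if the embedding approach proves delicate, is to compute the degree of $\phi_{T,T}^{JC}$ directly at a random rational point in Singular, using the rigour of an exact Gr\"obner computation together with a semicontinuity argument to conclude that the generic fiber is at most the size found, which then matches the lower bound of 2. Either route reduces the 5-taxon JC case to a concrete finite verification, while the K2P and K3P cases are handled cleanly by the marginalization/reconstruction argument above.
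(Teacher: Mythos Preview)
Your approach is quite different from the paper's and has a structural problem: the marginalization strategy rests on Proposition*~\ref{prop:4leafparamid2}, which in this paper is only established \emph{computationally} (hence the star), not rigorously. If you feed a starred input into your argument, the output is at best a starred statement, but Proposition~\ref{prop:fiveleafsametree} is one of the few \emph{unstarred} results in this section. For K2P and K3P you invoke the 4-taxon two-to-oneness as though it were proved, but it isn't; for JC you explicitly fall back to a Gr\"obner check, which again yields only a starred conclusion. There are also loose ends in the reconstruction step for K2P/K3P (recovering edge parameters from a point in $V_{T|_K}$ is only well-defined up to the state-relabeling ambiguity at internal nodes, and you would need to argue these ambiguities are compatible across different $K$), but the rigor issue is the decisive one.

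The paper bypasses marginalization entirely and applies Kruskal's theorem on uniqueness of rank-$1$ decompositions of $3$-way tensors. Rooting the $5$-taxon tree at the internal node adjacent to leaf $3$ and grouping the leaves as $\{1,2\}$, $\{3\}$, $\{4,5\}$, the $2$-class mixture distribution is expressed as an $8$-term tensor $[\boldsymbol\pi; M_{12}, M_3, M_{45}]$ with $8$-row factor matrices. A single explicit parameter choice verifies that the Kruskal ranks are generically $8,\ \ge 2,\ 8$, satisfying $I_1+I_2+I_3\ge 2\cdot 8+2$, so Kruskal's theorem recovers the factors up to a common row permutation; the stacked-block form of $M_3$ then forces the permutation to be the class swap. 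This argument is uniform across JC, K2P, and K3P and needs no $4$-taxon identifiability input at all. Your marginalization-and-reconstruction idea is in the spirit of what the paper does for the \emph{starred} $5$-taxon result (Proposition*~\ref{lem:5taxa}), but it cannot deliver a rigorous Proposition~\ref{prop:fiveleafsametree} without first upgrading Proposition*~\ref{prop:4leafparamid2} to a theorem.
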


The proof of Proposition \ref{prop:fiveleafsametree} depends on a result of
J.~Kruskal concerning uniqueness of rank 1 tensor decompositions for
$3$-way arrays.  As this has been exploited elsewhere
\cite{Allman2009b,Allman2009a} to study identifiability of models, we
give only essentials here. If $M_1,M_2,M_3$ are three matrices with
$r$ rows, and $\boldsymbol \pi$ is an $r$-element vector, let $\mathbf
m^i_j$ denote row $i$ of matrix $M_j$. Let
$$[\boldsymbol \pi;M_1,M_2,M_3]=\sum_{i=1}^r \pi_i\mathbf  m^i_1\otimes \mathbf m^i_2 \otimes \mathbf m^i_3.$$
The form of Kruskal's theorem most useful for our purposes is the
following, from \cite{Allman2009b}.

\begin{thm}(Kruskal)\label{thm:kruskal} Let $\boldsymbol \pi$ be an
  $r$-element vector of non-zero numbers, and $M_1,M_2,M_3$ three
  matrices with $r$ rows, all of whose row sums are 1. Let $I_i$, the
  \emph{Kruskal rank} of $M_i$, be the largest integer such that every
  set of $I_i$ rows of $M_i$ is independent, and
  suppose $$I_1+I_2+I_3\ge 2r+2.$$ Then if $[\boldsymbol
  \pi;M_1,M_2,M_3]=[\boldsymbol \pi';M_1',M_2',M_3']$, there is a
  permutation $P$ such that
$$\boldsymbol \pi=P\boldsymbol \pi',\ \ M_1=PM_1',\ \ M_2=PM_2',\ \  M_3=PM_3'.$$
\end{thm}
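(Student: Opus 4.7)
The plan is to follow Kruskal's original reduction of the uniqueness statement to a combinatorial linear-algebra lemma about column supports, traditionally called \emph{Kruskal's Permutation Lemma}. First I would recast the tensor equation by setting $A_j=M_j^T$, so that each $A_j$ is a matrix whose $r$ columns are the row vectors $\mathbf m^i_j$. Unfolding $[\boldsymbol\pi;M_1,M_2,M_3]$ along the first mode identifies the three-way array with the matrix
$$A_1\operatorname{diag}(\boldsymbol\pi)(A_3\odot A_2)^T,$$
where $\odot$ denotes the Khatri--Rao (columnwise Kronecker) product. The hypothesis $I_1+I_2+I_3\ge 2r+2$ in particular forces $I_2+I_3\ge r+1$, which is the classical condition guaranteeing that $A_3\odot A_2$ has full column rank $r$; the same holds for the primed quantities. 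From the equality of the two unfoldings one then extracts a unique invertible $F\in\operatorname{GL}_r$ relating the factorizations, with $A_1'\operatorname{diag}(\boldsymbol\pi')F=A_1\operatorname{diag}(\boldsymbol\pi)$ and $A_3'\odot A_2'=(A_3\odot A_2)F^{-T}$.

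Next I would translate this equality into a sparsity statement. Each column of $A_3'\odot A_2'$ is a simple tensor $a'\otimes b'$; expressing such a column as a linear combination of columns of $A_3\odot A_2$, viewed as elements of the Segre cone, forces the combining vector to have very restricted support. Combining this with the Kruskal ranks of $A_2$ and $A_3$ is precisely the input to the Permutation Lemma, which asserts that an invertible $r\times r$ matrix sending a high-Kruskal-rank configuration to ``low-support'' linear combinations must be monomial, i.e.\ a permutation times a diagonal matrix. The main obstacle is proving this lemma. The standard inductive argument considers a column of $F^{-T}$ with support strictly greater than one and having minimum weight in the column space of $F^{-T}$; using the rank hypothesis on $A_2$ and $A_3$, one produces from it a vector in the column space of $A_3\odot A_2$ whose projection onto either factor would have support strictly below the corresponding Kruskal rank, contradicting minimality. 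The Kruskal rank budget $I_1+I_2+I_3\ge 2r+2$ is precisely what bounds all the supports tightly enough to make this contradiction go through; relaxing the bound in any of the three inequalities blocks the argument.

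Once the Permutation Lemma is in place, one applies it to each of the three modes and obtains a single permutation matrix $P$ and diagonal matrices $\Lambda_1,\Lambda_2,\Lambda_3$ with $A_j=A_j'P\Lambda_j$, together with a compatibility relation forcing $\Lambda_1\Lambda_2\Lambda_3=I$ and $\boldsymbol\pi=P\boldsymbol\pi'$ after absorbing the diagonal scaling. Finally, I would invoke the hypothesis that every row of each $M_j$ sums to $1$: since the rows of $M_j$ are the columns of $A_j$, this normalization determines each column up to a trivial scale, so the $\Lambda_j$ must all be the identity. Translating back to the $M_j$ then yields exactly the claimed conclusion $\boldsymbol\pi=P\boldsymbol\pi'$ and $M_j=PM_j'$ for $j=1,2,3$.
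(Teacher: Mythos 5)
You should first note a mismatch of expectations: the paper does not prove Theorem \ref{thm:kruskal} at all. It is quoted, with attribution to Kruskal, in the form given in \cite{Allman2009b}, and is used as a black box in the proof of Proposition \ref{prop:fiveleafsametree}. So the benchmark for your proposal is Kruskal's original argument, whose broad outline --- matricize, then reduce to the Permutation Lemma --- you correctly identify. Your final step is also right: since all row sums are $1$ and $\boldsymbol\pi$, $\boldsymbol\pi'$ have nonzero entries, the usual diagonal scaling ambiguity is forced to be trivial, which is exactly how this statement differs from the generic scaling-ambiguous form of Kruskal's theorem.

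There is, however, a genuine gap in the middle of your sketch. You extract ``a unique invertible $F\in\operatorname{GL}_r$'' relating the two factorizations and assert that full column rank of the Khatri--Rao product ``holds for the primed quantities.'' Neither is available. The theorem hypothesizes Kruskal ranks only for the \emph{unprimed} matrices; nothing at all is assumed about $M_1',M_2',M_3'$. Worse, the condition $I_1+I_2+I_3\ge 2r+2$ does not force any single factor to have full column rank: take $I_1=2$ and $I_2=I_3=r$ (so the hypothesis holds with equality), with the $r\ge 3$ columns of $A_1=M_1^T$ in general position inside a $2$-dimensional subspace. Then the mode-$1$ unfolding $A_1\operatorname{diag}(\boldsymbol\pi)(A_3\odot A_2)^T$ has rank $2<r$, its row space is a \emph{proper} subspace of the column space of $A_3\odot A_2$, and the equality of unfoldings neither yields an invertible $F$ nor shows that the columns of $A_3'\odot A_2'$ --- which you treat as simple tensors expanded in the unprimed columns --- lie in that column space. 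This rank-deficient regime is precisely where Kruskal's theorem is hard; his proof avoids any change-of-basis matrix and instead verifies the support hypothesis of the Permutation Lemma directly, by a delicate count of the weights $\omega(x^TA_j)$ over test vectors $x$ --- the step you compress into ``forces the combining vector to have very restricted support,'' and the only place the budget $2r+2$ is genuinely spent. As written, your argument establishes uniqueness only under the stronger assumption that the factor matrices have rank $r$, a setting in which conditions much weaker than $2r+2$ are already known to suffice. To repair it, follow Kruskal's actual two-stage plan: first deduce, from the tensor equality and the unprimed Kruskal ranks alone, that the Permutation Lemma's sparsity premise holds for each mode, and only then apply the lemma mode by mode; no a priori rank facts about the primed factors are then required.
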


\begin{proof}[Proof of Proposition \ref{prop:fiveleafsametree}]
  To fix notation, let $T$ have non-trivial splits $\{12|345,123|45\}$
  and let $\rho$ denote the internal node on the pendant edge leading to
  leaf 3. Denote by $\psi$ the natural parameterization of $V_T$,
  in terms of the entries of $4\times 4$ Markov matrices.  As there is no
  advantage to working in Fourier coordinates here, we use standard
  ones for $V_T$ and $V_T*V_T$.

  With $id$ the identity map on $\mathbb P^1$, it is enough to show
  the parameterization map $\phi_{T,T}\circ (\psi\times\psi\times id)$
  of $V_T*V_T$ is generically two-to-one.
  
  The map $\phi_{T,T}\circ (\psi\times\psi \times id)$ can be made
  explicit as follows: Root the tree at $\rho$, and assign stochastic
  matrices to the edges of the tree giving conditional probabilities
  of state changes along those edges.  If $\pi$ is the mixing
  parameter, and $\mathbf u=(1/4,1/4,1/4,1/4)$, then an $8$-element
  vector $$\boldsymbol \pi=(\pi\mathbf u,(1-\pi)\mathbf u),$$ gives
  the state distribution at the root. On the pendant edge leading to
  leaf $i$, a $8\times 4$ matrix $M_i$ composed of two stacked Markov
  matrices $M_i^{(1)}$ and $M_i^{(2)}$ of the appropriate form gives
  conditional probabilities, while on the internal edges there are
  $8\times 8$ block-diagonal matrices $M_{12|345}$ and $M_{123|45}$
  with two $4\times 4$ blocks $M_{12|345}^{(1)}$, $M_{12|345}^{(2)}$
  and $M_{123|45}^{(1)}$, $M_{123|45}^{(2)}$ of an appropriate
  form. Thus the superscript $(1)$ or $(2)$ refers to the class in the
  mixture.

  Now the $8\times 16$ matrix
  $M_{12}=M_{12|345}(M_1\otimes^{row}M_2)$, where $\otimes^{row}$
  denotes tensor products of corresponding rows, gives probabilities
  of observing pairs of states at leaves 1 and 2 conditioned on the
  state at $\rho$. A similar $8\times 16$ matrix product $M_{34}$ gives
  probabilities of observing pairs of states at leaves 4 and 5
  conditioned on the state at $\rho$.

  For any choice of $\boldsymbol \pi$ and the $M_e^{(1)},M_e^{(2)}$,
  the image $X$ under $\phi_{T,T}\circ (\psi\times\psi\times id)$ has
  the same entries as the 3-way array $[\boldsymbol
  \pi;M_{12},M_3,M_{45}]$. But for generic choices of parameters, one
  can check that $M_{12}$ and $M_{45}$ have Kruskal rank 8, and $M_3$
  has Kruskal rank $\ge2$.  Indeed, one need only check that this
  holds for a single choice of the parameters, since then the
  condition, which is defined by polynomial inequalities, can fail
  only on a proper subvariety of the parameter space. For instance,
  choosing $M_1^{(1)}=M_2^{(1)}$ to be a JC matrix with off-diagonal
  entry $0.1$, $M_1^{(2)}=M_2^{(2)}=I_4$, and $M_{12|345}=I_8$, a
  calculation shows $M_{12}$ has rank 8, and hence Kruskal rank 8.

  Applying Theorem \ref{thm:kruskal}, we get that $\boldsymbol \pi$,
  $M_{12}$, $M_3$, and $M_{45}$ are all uniquely determined up to
  simultaneous permutation of rows.  
  
  However, because of the special
  form of the Markov matrices for the models we consider,
  for generic
  JC parameters there are exactly two orderings to the rows of $M_3$ so
  that it is two stacked blocks of the correct form, and these differ
  by simply interchanging the blocks. Thus we may recover $M_3^{(1)}$
  and $M_3^{(2)}$, up to order. Fixing the ordering of the rows of
  $M_3$ so $M_3^{(1)}$ is on top fixes an ordering of the rows of
  $\boldsymbol \pi$, $M_{12|345}$, and $M_{123|45}$ as well. Letting a
  superscript of 1 denote the top 4 rows, and a superscript of 2 the
  bottom 4 rows of an 8-row matrix, the mixture distribution can be
  written as $$\pi\left [\mathbf
    u;M_{12}^{(1)},M_3^{(1)},M_{45}^{(1)}\right]+(1-\pi)\left[\mathbf
    u;M_{12}^{(2)},M_3^{(2)},M_{45}^{(2)}\right].$$ But this weighted
  sum is simply the weighted sum of the two points in the image of
  $\psi$ corresponding to the two classes. As $\psi$ is known to be
  generically one-to-one, the parameterization of $V_T*V_T$ is
  two-to-one in the JC case.

As discussed following the statement of Theorem* \ref{thm:paramident}, for the K2P and K3P models there are additional orderings of the rows of $M_3$ so that is is two stacked blocks of the correct form. Regardless of which ordering we choose, however, by arguing as in the preceding paragraph we are led to the same two points
  in the image of $\psi_T$. Thus for these models also we see the parameterization of $V_T*V_T$ by $\phi_{T,T}$ is
  two-to-one.
\end{proof}

\smallskip

Note that the use of Kruskal's theorem in this proof extends to a
2-class CFN mixture model on a 5-taxon tree, as then the Kruskal ranks
of the matrices $M_{12}$ and $M_{45}$ are generically 4, while $M_2$
has Kruskal rank $\ge 2$ .  Although we do not focus on that model
here, we record the result, as it is helps place the examples of
\cite{Matsen2007} for 4-taxon trees into context.

\begin{prop} Let $T$ be a binary tree with five leaves.  Then for the
  CFN models the map
$$\phi_{T,T}: V_T \times V_T \times \mathbb{P}^1 \dashrightarrow V_{T} \ast V_T$$
is generically two-to-one.
\end{prop}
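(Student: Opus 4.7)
The plan is to follow the Kruskal-theorem approach used for Proposition \ref{prop:fiveleafsametree}, adapted to the 2-state setting. As before, I would fix $T$ to be the 5-taxon tree with non-trivial splits $\{12|345, 123|45\}$, root at the internal node $\rho$ on the pendant edge to leaf 3, and study the parameterization $\phi_{T,T}\circ(\psi_T\times\psi_T\times id)$. With 2 states, the uniform distribution is $\mathbf u=(1/2,1/2)$, the root distribution vector becomes $\boldsymbol \pi=(\pi\mathbf u,(1-\pi)\mathbf u)\in\mathbb{R}^4$, each pendant edge matrix $M_i$ is a $4\times 2$ stack of two $2\times 2$ CFN matrices (one per mixture component), and each internal edge matrix is a block-diagonal $4\times 4$ matrix with two $2\times 2$ CFN blocks.

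Forming the products $M_{12}=M_{12|345}(M_1\otimes^{row} M_2)$ and $M_{45}=M_{123|45}(M_4\otimes^{row} M_5)$ yields $4\times 4$ matrices, while $M_3$ remains $4\times 2$. The image point satisfies the tensor identity $[\boldsymbol\pi;M_{12},M_3,M_{45}]$, with $r=4$. To invoke Theorem \ref{thm:kruskal} I need $I_1+I_2+I_3\ge 2r+2=10$. A single explicit parameter choice (for instance, identity internal edge matrices together with pendant CFN matrices having distinct off-diagonal entries) should suffice to verify that $M_{12}$ and $M_{45}$ are generically of rank 4 and $M_3$ generically of Kruskal rank 2 (the maximum possible, given only 2 columns); genericity then follows because these Kruskal rank conditions are cut out by polynomial inequalities. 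The sum $4+2+4=10$ barely meets the Kruskal hypothesis, so the theorem determines $\boldsymbol\pi$, $M_{12}$, $M_3$, $M_{45}$ uniquely up to a common row permutation.

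To recover the two components $p_1,p_2\in V_T$, I would then argue that the four rows of $M_3$ partition uniquely, up to relabeling the two classes, into two CFN pairs: two rows form a CFN block exactly when they are reverses of one another, and the four rows arising from two generic CFN components admit only the intended reverse-pairing. Combined with the block structure of $\boldsymbol\pi$ (two copies of $\pi/2$ followed by two copies of $(1-\pi)/2$), this pins down the partition of the row index set into class 1 and class 2 up to the class-swap involution.

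The main obstacle, and the point where the CFN analysis diverges from the JC case, will be handling within-class row swaps: unlike JC, swapping the two rows of a CFN matrix produces another valid CFN matrix, so Kruskal's theorem does not by itself pin down the rows within a class. I would resolve this by observing that a within-class row swap is exactly a relabeling of the two states at $\rho$ for that component; because the root distribution on that class is uniform and the induced row-and-column swaps on the CFN matrices incident to $\rho$ again yield CFN matrices, this relabeling leaves the point $p_i\in V_T$ unchanged. Thus the only essential fiber symmetry of $\phi_{T,T}$ is the class swap $(p_1,p_2,\pi)\leftrightarrow(p_2,p_1,1-\pi)$, giving the claimed generic two-to-one-ness.
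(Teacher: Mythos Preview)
Your proposal is correct and follows essentially the same approach as the paper: the paper does not give a separate proof for this proposition, but simply remarks that the Kruskal-theorem argument used for Proposition~\ref{prop:fiveleafsametree} extends to the CFN case, with $M_{12}$ and $M_{45}$ generically of Kruskal rank~$4$ and $M_3$ of Kruskal rank~$\ge 2$, so that $4+2+4=10=2r+2$. Your handling of the within-class row swap (noting that swapping the two rows of a CFN block is a state relabeling at $\rho$ that leaves the point of $V_T$ unchanged) is exactly parallel to the paper's treatment of the extra row orderings in the K2P and K3P cases, and is a detail the paper's brief remark omits but implicitly relies on.
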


\begin{propstar}\label{lem:5taxa}
  For the JC $2$-tree mixture model on 5-taxon binary trees,
  the continuous parameters are generically identifiable.
\end{propstar}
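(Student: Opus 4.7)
The plan is to handle two cases separately, according to whether $T_1 = T_2$ or $T_1 \neq T_2$. In the former case, I would simply invoke Proposition~\ref{prop:fiveleafsametree}, which rigorously establishes generic two-to-oneness of $\phi_{T,T}$ on a 5-taxon tree for JC; this is the sharpest identifiability statement available, since class-swapping precludes one-to-oneness.

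For $T_1 \neq T_2$, the idea is to reduce to the four-taxon identifiability result, Proposition*~\ref{prop:4leafparamid1}, by marginalization. Since a binary tree is determined by its induced quartets (and is treated in even finer detail by Proposition~\ref{prop:quartetmatched}), $T_1 \neq T_2$ guarantees the existence of some four-element $Q \subseteq [5]$ with $T_1|_Q \neq T_2|_Q$. Because marginalization is linear and commutes with mixtures, any coincidence $\phi_{T_1,T_2}(p_1, p_2, \pi) = \phi_{T_1,T_2}(p_1', p_2', \pi')$ pushes forward to the same type of coincidence for $\phi_{T_1|_Q, T_2|_Q}$. Applying Proposition*~\ref{prop:4leafparamid1} to such a $Q$ generically forces $\pi = \pi'$ together with equality of the marginal images $\bar p_i = \bar p_i'$, and hence, by generic one-to-oneness of the unmixed JC parameterization $\psi_{T_i|_Q}$, equality of the induced edge parameters on $T_i|_Q$. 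Repeating this for every four-element $Q$ on which the two induced trees differ, the pieces of information recovered should together pin down every edge parameter of both $T_1$ and $T_2$, since for $n = 5$ each edge of $T_i$ survives uncombined in the induced subtree $T_i|_Q$ for at least one such $Q$.

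The main obstacle will be this final combinatorial bookkeeping: verifying, for every pair of distinct 5-taxon binary trees $T_1, T_2$ (up to the action of $\mathfrak S_5$), that the collection of four-element subsets on which $T_1|_Q \neq T_2|_Q$ is rich enough to resolve every edge of both trees. If that check succeeds uniformly, the conclusion follows with the same ``high probability'' pedigree as Proposition*~\ref{prop:4leafparamid1}. Should any pair fall short of the combinatorial requirement, the natural fallback is the direct approach used to establish Proposition*~\ref{prop:4leafparamid1}: choose random rational JC parameters on $T_1$ and $T_2$ together with a random mixing weight, form the image point under $\phi_{T_1,T_2}$, and use Singular to solve the resulting polynomial system by Gr\"obner basis methods, checking that the complex fiber consists of a single point. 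This computation is more expensive than the four-taxon analog but still feasible for the JC model (unlike K2P or K3P on four taxa), which is presumably why the statement carries an asterisk rather than a rigorous proof.
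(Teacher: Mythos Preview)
Your primary combinatorial approach fails in the one-shared-cherry case, and the paper's fix for this is not the brute-force fallback you describe. Concretely, take $T_1=\{12|345,123|45\}$ and $T_2=\{12|345,124|35\}$. The only four-element subsets $Q$ on which the induced quartets differ are $\{1,3,4,5\}$ and $\{2,3,4,5\}$, since any $Q$ containing both $1$ and $2$ yields the common split $12|\cdot$. In neither of these two distinguishing quartets do the pendant edges to leaves $1$ and $2$, nor the internal edge $12|345$, survive uncontracted: removing leaf $2$ collapses the pendant edge to $1$ together with $12|345$ into a single edge, and symmetrically for removing leaf $1$. So the information from distinguishing quartets alone cannot separate those three edge parameters on either tree, and your claim that ``each edge of $T_i$ survives uncombined in the induced subtree $T_i|_Q$ for at least one such $Q$'' is false here.

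The paper handles this case by also marginalizing to the quartets that do contain both $1$ and $2$, where $T_1|_Q=T_2|_Q$. There one invokes Proposition*~\ref{prop:4leafparamid2}: the same-tree four-taxon JC mixture is generically twelve-to-one, but a generic fiber contains six distinct unordered pairs $\{\pi,1-\pi\}$. Since the mixing parameter $\pi$ has already been pinned down from the distinguishing quartet $\{1,3,4,5\}$ (via Proposition*~\ref{prop:4leafparamid1}), one can select the correct preimage among the twelve and thereby recover the remaining edge parameters, correctly matched to $T_1$ and $T_2$. Your direct Gr\"obner-basis fallback would presumably succeed, but it misses this structural use of the already-known $\pi$ to break the twelve-fold ambiguity; that is the idea the paper supplies and your plan is missing.
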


\begin{proof}
  If $T_1$ and $T_2$ have no cherries in common, then all their
  induced quartet trees disagree.  Thus applying Proposition*
  \ref{prop:4leafparamid1} to all 4-taxon marginalizations shows all
  parameters are generically identifiable.

  If $T_1$ and $T_2$ have 2 cherries in common, they are identical,
  and Proposition \ref{prop:fiveleafsametree} gives the claim.

  If $T_1$ and $T_2$ have a single cherry in common, we may assume
  they are $T_1=\{12|345,123|45\}$ and $T_2=\{12|345,124|35\}$. Also,
  since the parameters are generic, we may assume the mixing parameter
  giving the class size for the $T_1 $ component is $\pi\ne 1/2$.
  Then marginalizing to the taxa $\{1,3,4,5\}$ and applying
  Proposition* \ref{prop:4leafparamid1} identifies the parameters on 4
  edges of each of the trees, as well as the class size $\pi$ for
  $T_1$.  

 Marginalizing to quartets involving taxa 1 and 2, and applying Proposition*
    \ref{prop:4leafparamid2} to them, there are 12 points in a generic
    fiber.  However, such a generic fiber will have 6 distinct pairs
    of values $\{\pi, 1-\pi\}$, and we use the value of the mixing
    parameter $\pi$ determined above to match parameters with $T_1$
    and $T_2$.
\end{proof}

\smallskip

The results above allow us to argue for the generic identifiability of
parameters claimed in Theorem* \ref{thm:paramident}.

\smallskip

\begin{proof}[Proof of Theorem* \ref{thm:paramident}] 
The $n = 4$ case is Proposition*s \ref{prop:4leafparamid1},
and the $n=5$ case is Proposition* \ref{lem:5taxa}.

  For $n> 5$ leaves, by assuming that the parameters are generic we
  may also suppose the mixing parameter $\pi\ne 1/2$.
  
  By marginalizing to 5-taxon subsets, and applying Proposition*
  \ref{lem:5taxa}, we may identify parameters on each pair of induced
  5-taxon trees, but we must determine which come from which tree. If
  there is at least one 5-taxon subset for which $T_1$ and $T_2$
  induce different subtrees, then we know the class size parameter
  $\pi$ for $T_1$. Using this known value, we can determine which
  induced 5-taxon parameters arise from $T_1$ and which arise from
  $T_2$, even when the 5-taxon subtrees are topologically the same. If
  all 5-taxon subtrees of $T_1$ and $T_2$ agree, so $T_1=T_2$, then we
  instead use the value of $\pi_1$ to collect 5-taxon subtree
  parameters from each copy of the tree. As the parameters for $T_1$
  and $T_2$ are elements of the collection of induced parameters, we
  thus identify all parameters on the full trees.
\end{proof}

\smallskip

  In closing, note that the arguments in the proof of Theorem*
  \ref{thm:paramident} in combination with the results of Proposition
  \ref{prop:fiveleafsametree},  rigorously prove the following result, in the case
  of identical tree topologies.
  
\begin{thm}
  For the JC, K2P, and K3P models, the continuous parameters in the
  $2$-tree mixture on the same tree topology are generically identifiable for
  binary trees with $n \geq 5$ leaves.
\end{thm}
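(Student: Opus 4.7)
The plan is to mirror the argument given for Theorem* \ref{thm:paramident}, but specialized to the case $T_1=T_2=T$ and using the rigorous Proposition \ref{prop:fiveleafsametree} in place of the computational Proposition* \ref{lem:5taxa}. The base case $n=5$ is precisely Proposition \ref{prop:fiveleafsametree}: by the analysis of the parameterization $\phi_{T,T}$ via Kruskal's theorem applied to a 3-way flattening of the mixture distribution at an interior vertex of the 5-taxon tree, the continuous parameters are generically identifiable up to the expected class-swap $(s_1,s_2,\pi)\leftrightarrow(s_2,s_1,1-\pi)$, which is exactly the notion of generic identifiability in Definition 4.

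For $n>5$, I would assume the parameters are generic, so in particular the mixing parameter satisfies $\pi\neq 1/2$ and the two classes do not have identical numerical parameters. For each 5-element subset $K\subseteq[n]$, marginalize the mixture distribution to the taxa in $K$. Since marginalization is linear and therefore commutes with the mixture operation, the resulting distribution is a 2-tree mixture on the induced 5-taxon tree $T|_K$ with both classes evolving on the same topology $T|_K$. By Proposition \ref{prop:fiveleafsametree}, the continuous parameters for this 5-taxon mixture are generically identifiable up to class-swap.

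The main issue is to coherently match classes across the different 5-taxon marginalizations, since without such a matching one cannot reassemble full-tree parameters from induced ones. Because $\pi\neq 1/2$, each marginalization unambiguously recovers the same unordered pair $\{\pi,1-\pi\}$; fixing, once and for all, the convention that class $1$ has mixing weight equal to (say) the smaller of these two values yields a class labeling that is consistent across every subset $K$. With this labeling fixed, every edge of $T$ appears in the induced subtree $T|_K$ for some $K$, and the parameter on that edge for each of the two classes is recovered from the corresponding 5-taxon marginalization. Thus the full numerical parameters of the two mixture components are identified.

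The main obstacle to this program is the 5-taxon base case itself, but that work is already carried out in Proposition \ref{prop:fiveleafsametree}; given that rigorous input, the inductive extension to $n>5$ is essentially bookkeeping, with the only subtle point being the use of $\pi\neq 1/2$ to produce a globally consistent labeling of the two classes across all 5-taxon marginalizations.
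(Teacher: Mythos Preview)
Your proposal is correct and follows essentially the same approach as the paper: the authors explicitly state that the arguments in the proof of Theorem*~\ref{thm:paramident}, specialized to $T_1=T_2$ and combined with the rigorous Proposition~\ref{prop:fiveleafsametree} in place of the starred 5-taxon results, yield this theorem. Your write-up makes the class-matching step via $\pi\neq 1/2$ slightly more explicit than the paper does, but the substance is the same.
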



\bibliographystyle{plain}

\bibliography{2tree}

\vfill

\end{document}